\def\<#1>{\langle #1 \rangle}
\newcommand{\R}{\mathcal{R}}
\newcommand{\T}{\mathcal{T}}
\renewcommand{\H}{\mathcal{H}}
\newcommand{\X}{\mathcal{X}}
\newcommand{\F}{\Sigma}
\newcommand{\Q}{\ensuremath{Q}}
\newcommand{\A}{\mathcal{A}}
\newcommand{\B}{\mathcal{B}}
\newcommand{\C}{\mathcal{C}}
\newcommand{\var}{\mathit{var}}
\newcommand{\ptrs}[2]{{#1}{/}{#2}}
\newcommand{\emptyhedge}{\varepsilon}
\newcommand{\TRS}{\textup{HRS}\xspace}
\newcommand{\PTRS}{\textup{P}\TRS}
\newcommand{\uPTRS}{\textup{u}\PTRS}
\newcommand{\HA}{\textsf{HA}\xspace}
\newcommand{\CFHA}{$\textsf{CFHA}$\xspace}
\newcommand{\sCFHA}{$\textsf{CF}^2\textsf{HA}$\xspace}
\newcommand{\final}{\mathsf{f}}
\newcommand{\pre}{\mathit{pre}}
\newcommand{\post}{\mathit{post}}
\newcommand{\dom}{\mathit{dom}}
\newcommand{\hor}{\mathsf{h}}
\newcommand{\ver}{\mathsf{v}}
\newcommand{\stack}[2]{\ensuremath{{#2}^{#1}}}
\newcommand{\ustack}[2]{\ensuremath{{#2}_{#1}}}
\newcommand{\state}[4]{\ensuremath{{#1}^{#2}_{\ifthenelse{\equal{#3}{}}{}{{#3},}{#4}}}}
\newcommand{\nt}[4]{\ensuremath{\langle{#1}^{#2}_{\ifthenelse{\equal{#3}{}}{}{{#3},}{#4}}\rangle}}
\newcommand{\initial}[2]{\stack{#2}{#1}}
\newcommand{\CHILD}{\mathsf{ac}}
\newcommand{\SIB}{\mathsf{as}}
\newcommand{\PAR}{\mathsf{ap}}
\newcommand{\REN}{\mathsf{ren}}
\newcommand{\RPL}{\mathsf{rpl}}
\newcommand{\DEL}{\mathsf{del}}
\newcommand{\Pin}{\ensuremath{P_{\mathsf{in}}}}
\newcommand{\hospital}{\mathsf{hospital}}
\newcommand{\patient}{\mathsf{patient}}
\def\frew#1#2#3#4#5#6#7#8{
\setbox0=\hbox{$#6 #7 #1 #8$}%
\setbox1=\hbox{$#6 #7 #2 #8$}%
\ifdim \wd0>\wd1 \rlap{\rlap{\hbox to \wd0{#5}}%
                            {\hbox to\wd0{\hfil\lower #3\box1\relax\hfil}}}{\raise #4\box0}%
\else \rlap{\rlap{\hbox to \wd1{#5}}{\hbox to\wd1{\hfil\raise #4\box0\relax\hfil}}}{\lower #3\box1}%
\fi
}
\def\fstep#1#2#3#4#5{\mathchoice{\frew{#1}{#2}{1.10ex}{1.20ex}{#5}{\scriptstyle}{#3}{#4}}%
                                {\frew{#1}{#2}{1.02ex}{1.20ex}{#5}{\scriptstyle}{#3}{#4}}%
                                {\frew{#1}{#2}{0.51ex}{0.82ex}{#5}{\scriptscriptstyle}{#3}{#4}}%
                                {\frew{#1}{#2}{0.51ex}{0.69ex}{#5}{\scriptscriptstyle}{#3}{#4}}}
\newcommand{\lrstep}[2]{\mathrel{\fstep{#1}{#2}{\;\>}{\>\>\;}{\rightarrowfill}}}
\title{Rewrite Closure and CF Hedge Automata\thanks{%
This work has been partly supported by the ANR ContINT grant INEDIT (2012-15).}}
\author{Florent Jacquemard \inst{2} \and Michael Rusinowitch \inst{1}}
\institute{%
INRIA Nancy--Grand Est \& LORIA UMR -- \email{rusi@loria.fr}\\
615 rue du Jardin Botanique, 54602 Villers-les-Nancy, France.
\and
INRIA Paris--Rocquencourt \& Ircam UMR --
\email{florent.jacquemard@inria.fr}\\
1 place Igor Stravinsky, 75004 Paris, France.
}
\titlerunning{Rewrite Closure and CF Hedge Automata}
\authorrunning{F. Jacquemard and M. Rusinowitch}
\date{\today}
\begin{document}

\maketitle

\setcounter{footnote}{0}

\begin{abstract}
We  introduce an extension of %context-free hedge languages 
hedge automata
called bidimensional context-free hedge automata. 
The class of unranked ordered tree languages they recognize
is shown to be preserved by rewrite closure with inverse-monadic rules. 
We also extend the parameterized rewriting rules 
used for modeling the W3C XQuery Update Facility in previous works,  by the possibility to insert a new parent node above a given node.
We show that the rewrite closure of hedge automata languages with these  
extended rewriting systems are  context-free hedge languages. 
%Although these results are more general than \cite{JacquemardRusinowitch10ppdp} 
%the proof are somewhat simpler thanks to a new uniform representation 
%of vertical and horizontal steps of hedge automata.
%
%\keywords{Tree automata, 
%%Hedge automata, 
%%Context-free tree languages, 
%CF tree languages, 
%Rewrite closure, 
%Static typechecking,
%Access Control Policies Verification, 
%XML processing.}
\end{abstract}

\section*{Introduction}
Hedge Automata (HA) are  extensions of tree automata to 
manipulate unranked ordered trees. They appeared as a natural tool to support 
document validation since  the number of children of a node is not fixed in XML documents 
and the structural information (type)  of an XML document can be specified by an HA. 
%most of the typing formalisms currently used for XML are based on
%finite tree automata~\citep{Murata00,Schwentick07JCSS}. 
 
A central problem in XML document processing is  
\emph{static typechecking}. This problem 
amounts to verifying at compile time 
that every output XML document which is the result of a specified query or transformation 
applied to an input document with a valid input type  has a valid output type.
However  for  transformation languages such as the one provided by 
XQuery Update Facility (XQUF), the output type of 
(iterated) applications of update primitives are not easy to predict. 
Another important issue for  XML data processing 
is the specification and enforcement of access policies. 
A large amount of work has been devoted to secure XML querying. 
But most of the work focuses on read-only rights, and very few have considered update rights 
for a model based on XQUF operations~\cite{FundulakiManeth07,Bravo08,JacquemardRusinowitch10ppdp}. 
These works have considered the sensitive problem  of access control policy inconsistency,  %\cite{Bravo08,JacquemardRusinowitch10ppdp} 
that is, {\em whether a  forbidden operation can be simulated through a sequence of allowed operations}. 
For instance \cite{JacquemardRusinowitch10ppdp} presents a  $\hospital$ database  example  
where it is forbidden to rename a $\patient$ name in a medical file but the same effect can be obtained 
by deleting this file and inserting a new one. This example illustrates a so-called {\em  local inconsistency} problem 
and its detection can be reduced 
to checking the emptiness of a HA language. 

In formal verification of infinite state systems several regular model checking approaches 
represent sets of configurations by regular languages, 
transitions by  rewrite rules and (approximations of)  reachable  configurations
as  rewrite closure of regular  languages 
see e.g.~\cite{FeuilladeGT04,BouajjaniT05}.
Regular model checking~\cite{Bouajjani00regular} is extended from tree to hedge rewriting and hedge automata 
in~\cite{Touili07}, which gives a procedure to compute  reachability sets \emph{approximations}.
Here we compute exact reachability sets when the configuration sets are 
represented by  context-free hedge automata, hence beyond the regular (HA) ones. 
These results are interesting for automated verification where reachability sets are not always regular.

To summarize,  several XML validation or infinite-state verification 
problems would  benefit from procedures to compute rewrite-closure of hedge languages. 
We also need decidable formalisms beyond regular tree languages to capture rewrite closures. 

%\paragraph{Contributions.}
\noindent\textit{Contributions.}
In \cite{JacquemardRusinowitch10ppdp} we have proposed a model for XML update primitives of  XQUF 
as parameterized rewriting rules of the form: 
"insert an unranked tree from a regular tree language $L$ as the first child of a node labeled by $a$".
For these rules, we give  type inference algorithms, 
considering types defined by several classes of unranked tree automata.
In particular we have considered context-free hedge automata (CFHA, e.g. \cite{JR-rta2008}),  
a more general class than regular hedge automata 
and obtained by requiring that the sequences of sibling states  under a node to be 
in a context-free language. 
In this submission we first introduce a non-trivial extension of context-free hedge languages 
defined by what we call bidimensional context-free hedge automata (Section~\ref{sec:bidim}). 
This class is more expressive as shown by examples. 
%We have formulated  XQuery Updates ~\cite{xqupdate} 
%as parameterized rewriting rules  in \cite{JacquemardRusinowitch10ppdp}. 
The class is  also shown to be preserved by rewrite closure when applying 
inverse-monadic rules that are more general than the rules that were considered in~\cite{JR-rta2008}(Section \ref{sec:TRS}).

Then we extend the parameterized rewriting  rules 
used for modeling XQUF in~\cite{JacquemardRusinowitch10ppdp}
by the possibility to insert a new parent node above a given node.
We show in Section~\ref{sec:PTRS} how to compute the rewrite closure of HA languages with these  
extended rewriting systems. Although the obtained results are more general than \cite{JacquemardRusinowitch10ppdp} 
the proofs are somewhat simpler thanks to a new uniform representation 
of vertical and horizontal steps of CFHA. %  (Section~\ref{sec:PTRS}). 
\begin{ABS}
A full version %of this extended abstract 
is available at~\cite{versionlongue}.
\end{ABS}

%\paragraph{Related work.}
\noindent\textit{Related work.}
\cite{Delzanno12gandalf} presents a  static analysis of  XML document adaptations, 
expressed as sequences of XQUF primitives. The authors also use an automatic inference 
method for deriving  the type, expressed as a HA, of a sequence of document updates. 
The type is computed starting from the original schema  and from the  XQuery Updates  
formulated as rewriting rules as in~\cite{JacquemardRusinowitch10ppdp}.
However differently from our case the updates are applied in  parallel in one shot.

%\paragraph{Organization} 

\section{Preliminaries}
We consider a finite alphabet $\F$ and an infinite set of variables $\X$.
The symbols of $\F$ are generally denoted $a,b,c\ldots$ and the variables
$x$, $y$\ldots
The sets of \emph{hedges} and \emph{trees} over $\F$ and $\X$, 
respectively denoted $\H(\F, \X)$ and $\T(\F, \X)$,
are defined recursively as the smallest sets such that:
every $x \in \X$ is a tree,
if $t_1,\ldots,t_n$ is a finite sequence of trees (possibly empty),
then $t_1 \ldots t_n$ is a hedge and
if $h$ is a hedge and $a\in \F$, then $a(h)$ is a tree.
The empty hedge (case $n \geq 0$ above) is denoted $\varepsilon$ and 
the tree $a(\emptyhedge)$ will be simply denoted by $a$.
We use the operator $.$ to denote the concatenation of hedges.
%The set $\H(\F, \X)$ of \emph{hedges} over $\F$ and $\X$ 
%is the set of finite (possibly empty) sequences of trees
%and the set of \emph{trees} over $\F$ and $\X$ is 
%$\T(\F, \X) := \X \cup \bigl\{ a(h) \bigm| a \in \F, h \in \H(\F, \X) \bigr\}$.
%
A root (resp. leaf) of a hedge $h = (t_1\ldots t_n)$ 
is a root node (resp. leaf node, i.e. node without child) 
of one of the trees $t_1,..., t_n$.
The root node of $a(h)$ is called the \emph{parent} of every root of $h$
and every root of $h$ is called a \emph{child} of the root of $a(h)$.

%A \emph{path} is a sequence of nodes $n_0,\ldots, n_p$ such that for all $i < p$, 
%$n_{i+1}$ is a child of $n_i$. 
%In this case, $n_p$ is called a \emph{descendant} of $n_0$.
%As usual, we can see a hedge $h \in \H(\F, \X)$ as a function from its set
%of nodes $\dom(h)$ into labels in $\F \cup \X$.
%The label of the node $n \in \dom(h)$ is denoted by $h(n)$.

We will sometimes consider a tree as a hedge of length one, 
\textit{i.e.} consider that $\T(\F, \X) \subset \H(\F, \X)$.
The sets of ground trees (trees without variables) and ground hedges
are respectively denoted $\T(\F)$ and $\H(\F)$.
The set of variables occurring in a hedge $h \in \H(\F, \X)$ is denoted $\var(h)$.
A hedge $h \in \H(\F, \X)$ is called \emph{linear} if every variable of 
$\var(h)$ occurs once in $h$.
%
%\florent{substitutions (by hedges)}
A \emph{substitution} $\sigma$ is a mapping of finite domain from $\X$ into $\H(\F, \X)$%
\begin{ABS}
, whose application (written with postfix notation) is extended homomorphically to $\H(\F, \X)$.
\end{ABS}
\begin{RR}
.
The application of a substitution $\sigma$ to terms and hedges 
(written with postfix notation)
is defined recursively by
$x \sigma := \sigma(x)$ when $x \in \dom(\sigma)$,
$y \sigma := y$ when $y \in \X \setminus \dom(\sigma)$,
$(t_1 \ldots t_n) \sigma := (t_1\sigma \ldots t_n \sigma)$  for $n \geq 0$,
and $a(h) \sigma := a(h\sigma)$.
\end{RR}
The set $\C(\F)$ of \emph{contexts} over $\F$ %denoted by uppercase letters $C$, $D$...
contains the linear hedges of $\H\bigl(\F, \{ x \}\bigr)$.
The application of a context $C\in \C(\F)$ to a hedge $h \in \H(\F,\X)$ 
is defined by $C[h] := C \{ x \mapsto h \}$.
\begin{RR}
%Hence, applying a context $u$ to a hedge $h$ 
It consists in inserting $h$ in $C$ in place of the node labelled by $x$.
Sometimes, we write $h[s]$ in order to emphasize that $s$ is 
a subhedge (or subtree) of $h$.
\end{RR}

A \emph{hedge rewriting system} (\textbf{HRS}) $\R$ over a finite unranked alphabet $\F$ 
is a set of \emph{rewrite rules} of the form $\ell \to r$ 
where $\ell \in \H(\F,\X) \setminus \X$ and $r \in \H(\F,\X)$;
$\ell$ and $r$ are respectively called left- and right-hand-side 
(\emph{lhs} and \emph{rhs}) of the rule.
Note that we do not assume the cardinality of $\R$ to be finite.
A \TRS is called ground, resp. linear,
if all its \textit{lhs} and \textit{rhs} of rules are ground, resp. linear.

The rewrite relation $\lrstep{}{\R}$ of a \TRS $\R$ is
the smallest binary relation on $\H(\F, \X)$ containing $\R$
and closed by application of substitutions and contexts.
In other words, $h \lrstep{}{\R} h'$, 
%also written sometimes $h \lrstep{}{\R, n, \rho, \sigma} h'$
iff there exists a context $C$, 
%$n$ is the node labelled by the variable $x_u$ in the context $u$,
%called the \emph{rewrite node},
a rule $\ell \to r$ in $\R$ and a substitution $\sigma$
such that $h = C[ \ell \sigma ]$ and $h' = C[ r\sigma ]$.
The reflexive and transitive closure of $\lrstep{}{\R}$ is denoted $\lrstep{*}{\R}$.
\noindent 
Given $L \subseteq \H(\F,\X)$ and a \TRS $\R$, 
we define the \emph{rewrite closure} of $L$ under $\R$ as $\post_{\R}^*(L) := 
\{ h' \in \H(\F,\X) \mid \exists h \in L, h \lrstep{*}{{\R}} h' \}$%
\begin{ABS}
.
\end{ABS}
\begin{RR}
 and the \emph{backward rewrite closure} as $\pre_{\R}^*(L) := 
 \{ h \in \H(\F,\X) \mid \exists h' \in L, h \lrstep{*}{{\R}} h' \}$.
\end{RR}

\begin{example} \label{ex:TRS}
Let us consider the following rewrite rules 
\[
\R = \{ p_0(x) \to a.p_1(x), p_1(x) \to p_2(x).c, p_2(x) \to p_0(b(x)), p_2(x) \to b(x) \}.
\]
Starting from $p_0 = p_0(\emptyhedge)$, we have the following rewrite sequence
\begin{ABS}
\(
\end{ABS}
\begin{RR}
\[
\end{RR}
p_0 \to a.p_1 \to a.p_2.c \to a.p_0(b).c \to a.a.p_1(b).c \to a.a.p_2(b).c.c \to a.a.p_0(b(b)).c.c \to \ldots 
\begin{ABS}
\)
\end{ABS}
\begin{RR}
\]
\end{RR}
\begin{ABS}
The
\end{ABS}
\begin{RR}
We can observe that the 
\end{RR}
trees of the rewrite closure of $p_0$ under $\R$ which do not contain the symbols
$p_0$, $p_1$, $p_2$ is the set of T-patterns of the form 
$a\ldots a.b(\ldots b(b)).c\ldots c$ with the same number of $a$, $b$ and~$c$.
%\( \bigl\{ \overbrace{a(\ldots a}^n(\overbrace{b\ldots b}^n)) \mid n \geq 1 \bigr\} \)
\end{example}

% \noindent
% Given a set $L \subseteq \H(\F,\X)$ and a \TRS $\R$, 
% %\florent{restriction to $\T(\F)$}
% we define the rewrite closure of $L$ under $\R$ as
% $\post_{\R}^*(L) := 
%  \{ h' \in \H(\F,\X) \mid \exists h \in L, h \lrstep{*}{\ptrs{\R}{\A}} h' \}$
% and the backward rewrite closure as
% $\pre_{\R}^*(L) := 
%  \{ h \in \H(\F,\X) \mid \exists h' \in L, h \lrstep{*}{\ptrs{\R}{\A}} h' \}$.
% %We restrict to trees (instead of hedges) because we are mainly interested in tree languages below.
%\begin{RR}
%\newpage
%\end{RR}

\section{Bidimensional Context-Free Hedge Automata}
\label{sec:bidim}
%\begin{defn}[HA,CFHA] \label{def:ha}
A \emph{bidimensional context-free hedge automaton} ($\textbf{CF}^2\textbf{HA}$)
is a tuple % OVERFULL
$\A$ =\\ $\< \F, Q, Q^\final, \Delta>$ 
where  $\F$ is a finite unranked alphabet, 
$Q$ is a finite set of states disjoint from $\F$,
$Q^\final \subseteq Q$ is a set of final states, 
and $\Delta $ is a set of rewrite rules of one of the following form, 
where $p_1, \ldots, p_n \in Q \cup \F$, $q\in Q$ and $n \geq 0$
%\florent{change to $n \geq 1$ + $a \to q$ (?)}
\[
\begin{array}{rclcl}
p_1(x_1) \ldots p_n(x_n) & \to & q(x_1\ldots x_n) & & \mbox{called \emph{horizontal} transitions,}\\
p_1\bigl(p_2(x)\bigr) & \to & q(x) & & \mbox{called \emph{vertical} transitions.}
\end{array}
\]
The move relation $\lrstep{}{\A}$ between ground hedges of $\H(\F \cup Q)$ 
is defined as the rewrite relation defined by $\Delta$.
The language of a \sCFHA $\A$ in one of its states $q$,
denoted by $L(\A, q)$, % and also called the set of hedges of type $q$, 
is the set of ground hedges $h \in \H(\F)$ 
such that $h \lrstep{*}{\A} q$
(we recall that $q$ stands for $q(\varepsilon)$).  
%We say sometimes that $h$ has type $q$ (when $\A$ is clear from context).
A hedge is accepted by ${\A}$ if there exists $q \in  Q^\final$ such that $h \in L(\A, q)$. 
The language of $\A$, denoted by  $L(\A)$ is the set of hedges accepted by ${\A}$. 
%When $\F $ is clear from the context it is omitted in the tuple specifying~$\A$. 
%
\begin{RR}
%\subsection{Normal Forms} % simplifying assumptions
%We use below the following simplified notations:
%$p_1 \cdots p_n \to q$ and $p_1(p_2) \to q$ 
%for respectively horizontal and vertical transitions.

Note that it is not a limitation in expressiveness to consider only the cases $n \leq 2$ for horizontal transitions.
The case $n = 1$ corresponds to a simple node relabeling rule.  %$p_1(x_1) \to q(x_1)$.
%and we can assume that all such rule have the form $a(x_1) \to q(x_1)$ with 
The case $n = 0$ corresponds to a transition $\emptyhedge \to q$ from the empty hedge.
We can assume \textit{wlog} a unique state $q_\emptyhedge$ such that there is a transition
$\emptyhedge \to q_\emptyhedge$ and that $q_\emptyhedge$ does not occur in \textit{lhs} of horizontal transitions.
Moreover, it is possible to force one variable $x_i$ in an horizontal transition as above to be~$\emptyhedge$.
Say for instance that we want to force $x_1 = \emptyhedge$.
We use a copy $p_1^\emptyhedge$ of the symbol $p_1$, 
%a new state $p_\emptyhedge$,
a new transition %$\emptyhedge \to p_\emptyhedge$ %($p_\emptyhedge$ stands for $p_\emptyhedge(\emptyhedge)$),
$p_1(q_\emptyhedge(x)) \to p_1^\emptyhedge(x)$,
where $q_\emptyhedge$ is as above,
and replace the transition  $p_1(x_1) \ldots p_n(x_n) \to q(x_1\ldots x_n)$ by
$p_1^\emptyhedge(x_1).p_2(x_2) \ldots p_n(x_n) \to q(x_1\ldots x_n)$.
We can apply the same principle to vertical transitions in order to force $x = \emptyhedge$.
\end{RR}
\begin{RR}
Therefore, we shall
\end{RR}
\begin{ABS}
We shall 
\end{ABS}
also consider below the following kind of transitions,
which have the same  expressiveness as \sCFHA.
\[
\begin{array}{rcl}
p_1(\delta_1) \ldots p_n(\delta_n) & \to & q(\delta_1\ldots \delta_n)\\
p_1(p_2(\delta_1)) & \to & q(\delta_1)
\end{array}
\quad
\begin{array}{l}
n > 0\\
\mathrm{every}\; \delta_i\; \mathrm{is~either~a~variable}\; x_i\; \mathrm{or}\; \emptyhedge
\end{array}
\]
%where $n > 0$ and each $\delta_i$ is either a variable $x_i$ or $\emptyhedge$ (in the latter case, $\emptyhedge$ can be ignored).
\begin{RR}
For instance, $p_1 . p_2(x_2)\ldots p_n(x_n) \to q(x_2\ldots x_n)$ is equivalent to the above horizontal transition.
\end{RR}

\begin{example} \label{ex:T-pattern}
The language of T-patterns over $\F = \{ a, b, c\}$, see Example~\ref{ex:TRS},
is recognized by 
$\< \F, \{ q_0, q_1, q_2 \}, \{ p_0 \}, \Delta>$
with 
$\Delta = \{ 
b(x_1) \to q_0(x_1),\;
a.q_0(x_2) \to q_1(x_2),\; 
q_1(x_1).c \to q_2(x_1),\;
q_2(b(x)) \to q_0(x) \}$.
\end{example}

\subsection{Related Models}

%\subsubsection{HA and CF-HA.}
The \sCFHA capture the expressiveness of two models of automata on unranked trees: 
the hedge automaton~\cite{Murata00}
and the lesser known extension of \cite{OhsakiST03} that we call \CFHA.
A \emph{hedge automaton} (\textbf{HA}),
resp. \emph{context-free hedge automaton} (\textbf{CFHA}) 
is a tuple $\A = \<\F, Q, Q^\final, \Delta>$ 
where  $\F$, $Q$ and $Q^\final$ are as above, 
and the transitions of $\Delta$ have the form $a(L) \to q$ 
where $a \in \F$, $q\in Q$ and $L \subseteq Q^*$ is a regular word language 
(resp. a context-free word language).
The language of hedges accepted is defined as for \sCFHA,
using the rewrite relation of $\Delta$.

%\begin{lemma}
%In expressiveness, \emph{\CFHA}$\subsetneq$ \emph{\sCFHA}.
%\end{lemma}
%
The \CFHA languages form a strict subclass of \sCFHA languages.
Indeed
%\begin{proof}
every \CFHA can be presented as a \sCFHA with variable-free transitions of the form 
\[
p_1 \ldots p_n \to q \quad
a(q_1) \to q_2 \quad
\mathrm{where}\; a \in \F\; \mathrm{and}\; q_1, q_2\; \mathrm{are~states}.
\]
%Indeed... 
%\florent{tbc}
It can be shown that the set of T-patterns of Example~\ref{ex:T-pattern}
is not a \CFHA language, using a pumping argument on the paths labeled by $b$.
%\qed
%\end{proof}

The \HA languages, also called \emph{regular} languages,
also form a strict subclass of \sCFHA languages.
Every \HA can indeed be presented as a \sCFHA
$\A = (\F, Q, Q^\final, \Delta)$ 
with variable-free transitions constrained with a type discipline:
$Q = \Q_\hor \uplus \Q_\ver$
and every transition of $\Delta $ has one of the forms
\[
\emptyhedge \to q_\hor \quad 
q_\hor . q_\ver \to q'_\hor \quad
a(q_\hor) \to q_\ver \quad
\mathrm{where}\; q_\hor, q'_\hor \in Q_\hor, q_\ver \in Q_\ver,  a \in \F.
\]
%where $a \in \F$ and $q_0, q_1, \ldots, q_n\in Q$ and $n \geq 0$,

%\begin{lemma}
%Typed \emph{\sCFHA} and \emph{\HA} are equivalent in expressiveness.
%%recognize the same classe of languages.
%\end{lemma}
%
%\begin{proof}
%...\qed
%\florent{TBD}
%\end{proof}
\noindent
From now on, we shall always consider \HA and \CFHA presented as \sCFHA.
%
%\subsubsection{Tree Walking Automata, Peeble Automata.}
%These are strict subclasses of regular tree languages.
%Simple characterization in tree of \sCFHA?
%
%
%\subsubsection{CF Ranked Tree Languages.}

\noindent
The following example shows that \sCFHA can capture some CF ranked tree languages.
Capturing the whole class of 
\begin{ABS}
CF RTL
%such languages
\end{ABS}
\begin{RR}
CF ranked tree language 
\end{RR}
would require however a 
further generalization where permutations of variables are possible in 
the horizontal transitions of \sCFHA.
Such a generalization is out of the scope of this paper.
\begin{example} \label{ex:CF-ranked}
The language $\{ h^n(g(a^n(0), b^n(0))) \mid n \geq 1 \}$ is generated by 
the CF ranked tree grammar~\cite{TATA} with non-terminals $A$ and $S$ ($S$ is the axiom)
and productions 
\begin{RR}
rules 
\end{RR}
$A(x_1,x_2) \to h\bigl(A(a(x_1), b(x_2))\bigr)$,
$A(x_1,x_2) \to g(x_1,x_2)$ and $S \to A(0,0)$.
It is also recognized by the \sCFHA 
with transition rules 
$a(x_1).b(x_2) \to q(x_1 . x_2)$,
$g(x_1) \to q_0(x_1)$,
$q_0(q(x)) \to q_1(x)$, 
$h(q_1(x)) \to q_0(x)$
($q_0$ is final).
\end{example}

%\florent{Tile rewriting, 2D CFG?}

\subsection{Properties}
%
%\paragraph{Boolean Closures.}
The class of \sCFHA language is
closed under union (direct construction by disjoint union of automata)
and not closed under intersection or complementation
(because CF word languages are defined by \sCFHA without vertical transitions).
%It is closed under intersection with regular (?)
%
\begin{property}
The membership problem is decidable for \sCFHA.
\end{property}
\begin{proof}
Let $h \in \H(\F)$ be a given hedge and $\A$ be a given \sCFHA.
We assume \textit{wlog} that $\A$ is presented as a set $\Delta$ of transitions in the above alternative form
$p_1(\delta_1) \ldots p_n(\delta_n) \to q(\delta_1\ldots \delta_n)$,  with $n > 0$,
and $p_1(p_2(\delta_1)) \to q(\delta_1)$.

Moreover, we assume that every transition of the form $q_1(x_1) \to q_2(x_1)$,
where $q_1$ and $q_2$ are states, has been removed,
replacing arbitrarily $q_1$ by $q_2$ in the \textit{rhs} of the other transitions.
Similarly, we remove $q_1 \to q_2$, replacing arbitrarily \textit{rhs}'s of the form $q_1$ by $q_2$.
All these transformations increase the size of $\A$ polynomialy.

Then all the horizontal transitions with $n=1$ have the form $a(\delta_1) \to q(\delta_1)$, with $a \in \F$.
It follows that the application of every rule of $\Delta$ 
strictly reduces the measure on hedges defined as pair
(\# of occurrences of symbols of $\F$, \# of occurrences of state symbols), ordered lexicographically.
%either strictly reduces the number of state symbols
%or strictly reduces the number of symbols of $\F$ without 
During a reduction of $h$ by $\Delta$, each of the two components of the 
above measure is bounded by the size of $h$.
It follows that the membership $h \in L(\A)$ can be tested in PSPACE.
\qed
\end{proof}

\begin{property}
The emptiness problem is decidable in PTIME for \sCFHA.
\end{property}
\begin{proof}
Let $\A = \< \F, Q, Q^\final, \Delta>$.
We use a marking algorithm with two marks: $\hor$ and~$\ver$.
First, for technical convenience, 
we mark every symbol in $\F$ with $\ver$.
Then we iterate the following operations until no marking is possible
(note that the marking is not exclusive: some states may have 2 marks $\hor$ and $\ver$).

\noindent
For all transition 
$p_1(x_1) \ldots p_n(x_n) \to q(x_1\ldots x_n)$ in $\Delta$
such that every $p_i$ is marked,
if at least one $p_i$ is marked with $\ver$, then mark $q$ with $\ver$,
otherwise 
\begin{RR}
($n = 0$ or every $p_i$ is marked with $\hor$), 
\end{RR}
mark $q$ with~$\hor$.
%\begin{RR}

%\end{RR}
\noindent
For all transition 
$p_1\bigl(p_2(x)\bigr) \to q(x)$ in $\Delta$
such that $p_1$ is marked $\ver$,
if $p_2$ is marked with $\ver$, then mark $q$ with $\ver$,
otherwise, 
if $p_2$ is marked with $\hor$, then mark $q$ with~$\hor$.

The number of iterations is at most $2.|Q|$ 
and the cost of each iteration is linear in the size of $\A$.
Then 
$q \in Q$ is marked with $\hor$ only iff
there exists $h \in \H(\F)$ such that $h \lrstep{*}{\Delta} q$,
and
%$q \in Q$ 
it is marked with $\ver$ iff
there exists $C[\,] \in \C(\F)$ such that for all $h \in \H(\F)$, 
$C[h] \lrstep{*}{\Delta} q(h)$.
\begin{ABS}
Hence
\end{ABS}
\begin{RR}
It follows that 
\end{RR}
$L(\A) = \emptyset$ iff no state of $Q^\final$ is marked.
\qed
\end{proof}

For comparison, for  both classes of \HA and \CFHA,
the membership and emptiness problems are decidable 
in PTIME,
the class of \HA languages is closed under Boolean operations and
the class of \CFHA languages is closed under union but not closed under intersection and complementation,
see \cite{Murata00,OhsakiST03,TATA}.
%The intersection of a \CFHA language and a \HA language is a \CFHA language.
%
%All these results are effective, with PTIME (resp. EXPTIME) constructions of automata
%of polynomial (resp. exponential) sizes for the closures under union and intersection  (resp. complement).

%\paragraph{Homomorphisms.}
%\florent{remove?}
%A hedge \emph{homomorphism} is...
%A hedge homomorphism is \emph{alphabetic} if ...
%
%\begin{property}
%The class of \CFHA languages is closed under alphabetic homomorphism.
%\end{property}

%\subsection{Extended Model}
%** alternative to states $\ustack{w}{q}$ **
%\florent{inutile}
%
%with transitions
%\[
%\begin{array}{rclcl}
%q_1\ldots q_n & \to & q_0 & & n \geq 0\\
%a(q_1) & \to & q_0 & & \\
%a(q_1)\;q_2 & \to & a(q_1) & & \\
%q_2\;a(q_1) & \to & a(q_1) & & \\
%a(q) & \to & b(q) & & \\
%\end{array}
%\]
%
%\begin{lemma}
%ext-sCFHA and \sCFHA recognize the same classes of languages.
%\end{lemma}

%%%%%%%%%%%%%%%%%%%%%%%%%%%%%%%%%%%%%%%%%%%%%%%%%%%%%%%%%%%
%%
%% REWRITING
%%
%%%%%%%%%%%%%%%%%%%%%%%%%%%%%%%%%%%%%%%%%%%%%%%%%%%%%%%%%%%

\section{Inverse Monadic Hedge Rewriting Systems} \label{sec:TRS}
A rewrite rule $\ell \to r$ over $\F$ is called \emph{monadic} 
(following~\cite{Salomaa88,CoquideDauchetGilleronVagvolgyi94})
if $r = a(x)$ with $a \in \F$, $x \in \X$,
\emph{inverse-monadic} 
if $r \to \ell$ is monadic and $r \notin \X \cup \{ \emptyhedge \}$,
%if it has the form  $a(x) \to h$, 
%where $h \in \H(\F, \{ x \}) \setminus \{x, \emptyhedge\}$
% and $x$ occurs at most once and has no siblings in $h$.
%\florent{possible d'avoir var. $\neq x$ linéaires in $r$.}
and \emph{1-childvar} if it contains at most one variable and this variable
has no siblings in $\ell$ and $r$.
%its \textit{lhs} and \textit{rhs}.
%Hence, such rules have the form of inverse-monadic rewrite rules,
%with restriction on the variable occurrences.
%
%We call a rewrite rule in normal form if it has one of the following forms,
%where $a, a', b \in \F$ and $x$ variable:
%\[
%%a(x) & \to & a'(b \, x) & & \mathrm{addition~of~child~node-left}\\
%%a(x) & \to & a'(x\, b)  & & \mathrm{addition~of~child~node-right}\\
%a(x) \to b . a'(x),\          %\mathrm{addition~of~sibling~node-left}\\
%a(x) \to a'(x) . b,\          %\mathrm{addition~of~sibling~node-right}\\
%a(x) \to a'(x),\              %\mathrm{renaming}\\
%a(x) \to b,\                  %\mathrm{node~replacement}\\
%a(x) \to a'\bigl(b(x)\bigr).  %\mathrm{addition~of~children~node}\\
%\]
%
Intuitively, every finite, linear, inverse-monadic, 1-childvar \TRS
can be transformed into a \TRS equivalent \textit{wrt} reachability
whose rules are inverse of transitions of \sCFHA.
It follows that such \TRS preserve \sCFHA languages.

\begin{example}
The \TRS of Example~\ref{ex:TRS} is linear, inverse-monadic, and 1-childvar.
The closure of the language $\{ p_0 \}$ is the \sCFHA language
of T-patterns.
\end{example}

\begin{theorem} \label{th:post*-monadic}
Let $L$ be the language of $\A_L \in$ \sCFHA, 
and $\R$ be a finite, linear, inverse-monadic, 1-childvar \TRS.
There exists an effectively computable \sCFHA recognizing $\post^*_{\R}(L)$,
of size polynomial in the size of $\R$ and $\A_L$.
\end{theorem}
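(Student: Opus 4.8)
The plan is to build the target automaton by \emph{completing} $\A_L$: we keep all of its states and transitions and add, for each rule of $\R$, finitely many fresh states and new transitions, so that the recognised language becomes closed under $\lrstep{}{\R}$ while staying inside $\post^*_{\R}(L)$. The guiding observation is the one announced before the statement: every rule $a(x)\to r$ is inverse-monadic, so its left-hand side is a single node $a(x)$ and the reversed rule $r\to a(x)$ is monadic; moreover $r$ is linear and 1-childvar, hence the content variable $x$ occurs at most once and always as an only child. This lets me read the construction of $r$ as a single \emph{spine} descending to the occurrence of $x$, along which the already-processed content can be threaded by \sCFHA transitions.

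Concretely, first I would normalise $\A_L$ into the alternative transition form of Section~\ref{sec:bidim}. Then, for each rule $a(x)\to r$, I would introduce a state $\hat a$ meant to simulate an $a$-node, together with copies of every transition in which $a$ occurs, with $a$ replaced by $\hat a$; intuitively $\hat a$ stands for ``an $a$-node, possibly already rewritten''. Next I would decompose $r$ along the spine from its root down to the hole carrying $x$: each spine node yields an inverse-vertical step $s(x)\to f\bigl(s'(x)\bigr)$ peeling one symbol, each fixed $x$-free subtree hanging off the spine is recognised by $\A_L$ and absorbed by an inverse-horizontal step, and the top of the spine is linked to $\hat a$. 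Reversing these elementary rules yields genuine \sCFHA transitions, which I add to $\A_L$ together with the $O(|r|)$ fresh spine states per rule; summing over the finite $\R$ this stays polynomial in the sizes of $\R$ and $\A_L$. No iterated saturation is required: since \sCFHA transitions compose freely, a single pass already lets an accepting run apply the rules arbitrarily often and at nested positions.

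For correctness I would prove two inclusions. Completeness, $\post^*_{\R}(L)\subseteq L(\A)$, follows by induction on the length of $h\lrstep{*}{\R}h'$: a single step $C[a(g)]\lrstep{}{\R}C[r[x:=g]]$ is matched by replaying, inside the run on $C[a(g)]$, the added spine transitions, which reduce $r[x:=g]$ to $\hat a(g)$ and thus to exactly the configuration the context expected from $a(g)$. Soundness, $L(\A)\subseteq\post^*_{\R}(L)$, is obtained by reading an accepting run of $\A$ bottom-up and folding every maximal block of added transitions back into one application of the corresponding rule, turning the run into an accepting run of $\A_L$ on some $h\in L$ together with a derivation $h\lrstep{*}{\R}h'$.

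The main obstacle is the content handling together with soundness. Because $a(x)\to r$ relocates the entire child hedge of $a$ to the deep only-child position of $x$ in $r$, the construction must thread the processed content along the spine without ever letting it interact with the fixed side-trees branching off it; this is precisely what the bidimensional, content-carrying transitions make possible, but it forces a careful invariant. Dually, the reversed transitions are expressive enough to recombine in ways that do not obviously correspond to legal rewrite sequences, so the delicate point is the invariant linking runs of $\A$ to runs of $\A_L$ plus $\R$-derivations, guaranteeing that the simulation states $\hat a$ are entered only where a rewrite could genuinely have taken place.
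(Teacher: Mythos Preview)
Your overall plan—augment $\A_L$ with states that trace the structure of each right-hand side so that the automaton can ``un-rewrite'' one rule application and continue—is exactly the paper's approach, and your two-inclusion correctness outline matches it. Two points, however, do not go through as written.

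First, the claim that ``each fixed $x$-free subtree hanging off the spine is recognised by $\A_L$'' is wrong. The side hedges of a right-hand side are specific ground hedges; $\A_L$ need not accept them (it may have no transitions for their symbols), and in any case you must check \emph{equality} with the prescribed hedge, not membership in some $\A_L$-state language. The paper handles this by creating one state $\underline{h}$ for every non-variable sub-hedge $h$ of a right-hand side (not only the spine positions), together with concatenation transitions such as $\underline{t}(x)\,.\,\underline{h}\to\underline{t.h}(x)$ and parent transitions $a(\underline{h}(x))\to\underline{a(h)}(x)$, so that reaching $\underline{h}$ certifies that the processed hedge matches $h$ exactly (with the carried content standing for $x$). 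This is still $O(|r|)$ states per rule, so the size bound survives, but the side hedges need their own machinery, not $\A_L$'s.

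Second, and more subtly, you introduce $\hat a$ only for left-hand-side symbols and then copy transitions. This does not yet account for the interaction between rules: once $a(g)\to r[x{:=}g]$ has been applied, later rewrite steps may occur \emph{inside the side hedges of $r$}, so what the automaton must recognise there is not literally the side hedge $s$ but any element of $\post^*_\R(\{s\})$. The paper avoids a second fixpoint by introducing a state $\underline{a}$ for \emph{every} symbol $a\in\F$ (plus a single sink state $q$ for positions that are not a sub-hedge of any rhs), lifting every node via $a(x)\to\underline{a}(x)$, rewriting all $\A_L$-transitions to read $\underline{a}$ in place of $a$, and adding the un-rewrite transitions $\underline{h}(x)\to\underline{a}(x)$ for each rule $a(x)\to h$ of $\R$. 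Because the pattern-matching transitions already operate on underlined symbols, a node that has itself been un-rewritten can immediately participate in matching a larger rhs, with no further copying. Your copy-based variant can be made to converge to the same thing, but you would have to copy the \emph{new} side-hedge transitions as well, which your description omits.

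For soundness the paper argues by induction on the number of un-rewrite transitions $\underline{h}(x)\to\underline{a}(x)$ used in the run: peel off one such transition and replace it by one rewrite step $a(x)\to h$ at the corresponding position, obtaining a run with fewer un-rewrite steps on a pre-image hedge. This is your ``fold every maximal block of added transitions back into one rule application''; the invariant you are looking for is precisely that reaching $\underline{h}$ means the processed hedge is an $\R$-successor of an instance of $h$.
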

\begin{proof}
Let $\A_L = \< \F, Q_L, Q_L^\final, \Delta_L>$, 
we construct a \sCFHA $\A = \< \F, Q, Q^\final, \Delta>$.
The state set $Q$ contains all the states of $Q_L$,
one state $\underline{h}$ for every non-variable sub-hedge of a \textit{rhs} of rule of $\R$,
one state $\underline{a}$ for each $a \in \F$ and one new state $q \notin Q_L$.
For each $p \in Q_L \cup \F$, we note $\underline{p} = \underline{a}$
if $p = a \in \F$ and $\underline{p} = p$ otherwise.
Let $Q^\final = Q_L^\final$ and let $\Delta_0$ contain the following transition rules,
where $a \in \F$, $t \in \T(\F, \{ x \})$ and $h \in \H(\F, \{ x \}) \setminus \{ \emptyhedge \}$. 
\[
\begin{array}{c}
\begin{array}{rcll}
\underline{p_1}(x_1) \ldots \underline{p_n}(x_n) & \to & q(x_1\ldots x_n) & 
\mathrm{if}\; p_1(x_1) \ldots p_n(x_n) \to q(x_1\ldots x_n) \in \Delta_L\\
\underline{p_1}\bigl(\underline{p_2}(x)\bigr) & \to & q(x) & \mathrm{if}\; p_1\bigl(p_2(x)\bigr) \to q(x) \in \Delta_L\\
\end{array}
\\[3mm]
\begin{array}{ccc}
\begin{array}{rcll}
\underline{t}(x) . \underline{h} & \to & \underline{t . h}(x)  & \mathrm{if}\; x \in \var(t), 
                                                                   \underline{t . h} \in Q\\
\underline{t}(x) . \underline{h} & \to & q(x)  & \mathrm{if}\; x \in \var(t), 
                                                                   \underline{t . h} \notin Q\\
\underline{t} . \underline{h}(x) & \to & \underline{t . h}(x) & \mathrm{if}\; x \notin \var(t),
                                                                  \underline{t . h} \in Q\\
\underline{t} . \underline{h}(x) & \to & q(x) & \mathrm{if}\; x \notin \var(t),
                                                                  \underline{t . h} \notin Q\\
%\underline{t . h}(x) & \to & \underline{t} . \underline{h}    & \mathrm{if}\; h \neq \emptyhedge\\
\end{array}
& &
\begin{array}{rcll}
a(x)                  & \to & \underline{a}(x)                &  \\
a(\underline{h}(x))   & \to & \underline{a(h)}(x)             &  \mathrm{if}\; \underline{a(h)} \in Q\\
a(\underline{h}(x))   & \to & \underline{a}(x)                &  \mathrm{if}\; \underline{a(h)} \notin Q\\
a(q(x))               & \to & \underline{a}(x)                &  \\ %\mathrm{if}\; a \in \F\\
\end{array}
\end{array}
\end{array}
\]
Finally let $\Delta = \Delta_0 \cup \{ \underline{h}(x) \to \underline{a}(x) \mid a(x) \to h \in \R \}$.
Let $\ell \in \H(\F)$ be such that $\ell \lrstep{*}{\Delta} s(u)\ (\star)$, with $s \in Q$ and $u\in \H(Q \cup \F)$.
We show by induction on the number $N$ of applications of rules of $\Delta \setminus \Delta_0$ in $(\star)$
that there exists $\ell' \in \H(\F)$ such that $\ell' \lrstep{*}{\R} \ell$ and moreover, 
if $s = \underline{h}$, then $h$ matches $\ell'$, 
if $s = q$ then $\ell'$ is not matched by a non-variable subhedge of \textit{rhs} of rule of $\R$ and
if $s \in Q_L$, then $\ell' \in L(\A_L, s)$.

If $N = 0$, then the property holds with $\ell' = \ell$
(this can be shown by induction on the length of $(\star)$).
If $N > 0$, we can assume that $(\star)$ has the following form.
%let us consider the first application of a rule of $\Delta \setminus \Delta_0$ in $(\star)$,
\[
\ell = C[k] \lrstep{*}{\Delta_0} C[\underline{h}(v)] \lrstep{}{\Delta \setminus \Delta_0}
       C[\underline{a}(v)] \lrstep{}{\Delta} s(u)
\]
It follows that $h$ matches $k$, i.e. there exists $w$ such that $k = h[w]$,
and $w \lrstep{*}{\Delta_0} v$.
Hence $\ell' = C[a(w)] \lrstep{}{\R} \ell$, 
and $\ell' \lrstep{*}{\Delta_0} C[a(v)] \lrstep{}{\Delta_0} C[\underline{a}(v)] \lrstep{}{\Delta} s(u)$.
We can then apply the induction hypothesis to $\ell'$, and 
%the expected property follows 
immediately conclude for~$\ell$.
\qed
\end{proof}

The following Example~\ref{ex:gen} illustrates the importance of the 1-childvar 
and condition in Theorem~\ref{th:post*-monadic}.
\begin{example} \label{ex:gen}
With the following rewrite rule %$(\mathsf{gen})$ rule, 
$a(x) \to c\,a(e\,x\,g)\,d$
we generate from $\{ a \}$ the language
$\{ c^n a(e^n g^n)\, d^n \mid n \geq 1 \}$,
seemingly not \sCFHA.
%with one more rewrite step with the collapsing $a(x) \to x$,
%we obtain the non-CF language (after intersection with a regular language)
%$\{ c^n e^n g^n d^n \mid n \geq 1 \}$.
\end{example}
%This Exemple suggests that the chance is low to find a decidable formalism 
%capturing the rewrite closure by inverse-monadic and non-1-childvar rules.

\medskip
In \cite{JR-rta2008} it is shown that the closure of a \HA language
under rewriting with a monadic \TRS %containing rules of the form $h \to a(x)$, where $h \in \H(\F, \X) \setminus \X$ 
is a \HA language.
It follows that the backward rewrite closure of a \HA language under an inverse-monadic \TRS is \HA.

%%%%%%%%%%%%%%%%%%%%%%%%%%%%%%%%%%%%%%%%%%%%%%%%%%%%%%%%%%%
%%
%% UPDATES
%%
%%%%%%%%%%%%%%%%%%%%%%%%%%%%%%%%%%%%%%%%%%%%%%%%%%%%%%%%%%%

\section{Update Hedge Rewriting Systems} \label{sec:PTRS}
In this section, we turn to our 
motivation of studying XQuery Update Facility primitives
modeled as parameterized rewriting rules.

Let $\A = \<\F, Q, Q^\final, \Delta>$ be a \HA.
%\florent{recall \HA is a restricted \sCFHA?}
A hedge rewriting system over $\F$ parame\-tri\-zed by $\A$ 
(\textbf{PHRS})
%(see \citep{Gilleron91}) %  Loding02})
%(see~\citep{GilleronTison95})
is given by a finite set, denoted $\ptrs{\R}{\A}$, of 
rewrite rules $\ell \to r$ 
where 
%$\ell \in \H(\F,\X) \setminus \X$ and 
$\ell \in \H(\F,\X)$ and 
$r \in \H(\F \uplus Q,\X)$ and symbols of $Q$ can only label leaves of $r$
($\uplus$ stands disjoint union, hence we implicitly 
assume that $\F$ and $Q$ are disjoint sets).
In this notation, $\A$ may be omitted when it is clear from context
or not necessary.
The rewrite relation $\lrstep{}{\ptrs{\R}{\A}}$ associated to a \PTRS $\ptrs{\R}{\A}$ is
defined as the rewrite relation $\lrstep{}{\R[\A]}$ 
where the \TRS $\R[\A]$ is the (possibly infinite) set of all rewrite rules
obtained from rules $\ell \to r$ in $\ptrs{\R}{\A}$
by replacing in $r$ every state $p \in Q$ by a ground hedge of $L(\A, p)$. 
Note that
when there are multiple occurrences of a state $p$ in a rule,
each occurrence of $p$ is independently replaced with
a hedge in $L(\A, p)$, which can generally be different from one another.
Given a set $L \subseteq \H(\F,\X)$, 
we define $\post_{\ptrs{\R}{\A}}^*(L)$ to be $\post_{\R[\A]}^*(L)$%
\begin{ABS}
.
\end{ABS}
\begin{RR}
and $\pre_{\ptrs{\R}{\A}}^*(L)$ symmetrically. 
\end{RR}

\noindent
We call \emph{updates} parametrized rewrite rules of the following form
%(given $\A$ as above)
%given a fixed \HA $\A = (\F, Q_\A, Q_\A^\final, \Delta_\A)$.
%
\[
\begin{array}{rcllclc}
a(x) & \to & b(x)               & & \quad & \mathrm{node~renaming} & (\REN)\\
a(x) & \to & a(u_1 \, x \, u_2) &\ u_1, u_2 \in Q^* & & \mathrm{addition~of~child~nodes} & (\CHILD)\\
a(x) & \to & v_1 \, a(x) \, v_2 &\ v_1, v_2 \in Q^* & & \mathrm{addition~of~sibling~nodes} & (\SIB)\\
a(x) & \to & b\bigl(a(x)\bigr)  & & & \mathrm{addition~of~parent~node} & (\PAR)\\
a(x) & \to & u                  &\ u \in Q^* & & \mathrm{node~replacement/recursive~deletion}\quad & (\RPL)\\
a(x) & \to & x                  & & & \mathrm{single~node~deletion} & (\DEL)\\
\end{array}
\]
Note that the particular case of ($\RPL$)  of $\RPL$ with $u = \emptyhedge$
corresponds to the deletion of the whole subtree $a(x)$.
In the rest of the paper, a \PTRS containing only updates 
will be called update \PTRS (\textbf{uPHRS}).

%\florent{example?}

\subsection{Loop-free \uPTRS}
In order to simplify the proofs %in the following  
we can reduce 
to the case where there exists no looping sequence of renaming. 
This motivates the following definition: 
\begin{definition}\label{def:loopfree}
An \uPTRS $\ptrs{\R}{\A}$ is {\em loopfree} if there exists no sequence 
$a_1,\ldots, a_n$ ($n >1$) such that  for all $1\leq i<n$, 
\begin{ABS}
$a_i(x)\to a_{i+1}(x) \in \R$ 
\end{ABS}
\begin{RR}
$a_i(x)\to a_{i+1}(x)$ is a renaming rule of $\R$ 
\end{RR}
and $a_1 = a_n$. 
%\florent{and $a_1 = a_n$}
\end{definition}
Given a \uPTRS $\ptrs{\R}{\A}$, 
we consider the directed graph $G$ whose set of nodes is $\F$ 
and containing an edge $\< a,b>$ iff $a(x)\to b(x)$ is in $\R$. 
For every strongly connected component in $G$ we select a representative. 
We denote by $\hat{a}$ the representative of $a$ in its component and more generally 
by $\hat{h}$ the hedge obtained from $h \in \H(\F)$ 
by replacing every function symbol $a$ by its representative $\hat{a}$. 
We define $\hat{\R}$ to be $\R$ where every rule $\ell \to r$ is replaced 
by $\hat{\ell}\to \hat{r}$ (if the two members get equal we can remove the rule). 
We define $\hat{\A}$ analogously. 

\begin{lemma}
Given an  \uPTRS $\ptrs{\R}{\A}$  the  \uPTRS $\ptrs{\hat{R}}{\hat{A}}$ is loopfree and  
for all $h,h'\in\H(\F)$ we have $h \lrstep{*}{\ptrs{\R}{\A}}h'$ iff 
$\hat{h}\lrstep{*}{\ptrs{\hat{R}}{\hat{A}}}\hat{h}'$.
\end{lemma}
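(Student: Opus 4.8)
The plan is to regard the relabeling $\hat{\cdot}$ as a homomorphism on hedges that replaces every $\F$-symbol $a$ by the representative $\hat{a}$ of its strongly connected component in $G$, while leaving states and variables untouched, and I would first record three elementary facts. First, $\hat{\cdot}$ is idempotent ($\hat{\hat h}=\hat h$, since the representative of a representative is itself) and commutes with context application and substitution, so that $\widehat{C[\ell\sigma]}=\hat C[\hat\ell\,\hat\sigma]$ with $\hat\sigma(x)=\widehat{\sigma(x)}$. Second, since $\hat{\A}$ is obtained from $\A$ by relabeling the $\F$-symbols of its transitions while keeping states and regular word languages unchanged, a run of $\A$ on a hedge $v$ projects to a run of $\hat{\A}$ on $\hat v$, and conversely every run of $\hat{\A}$ lifts by choosing at each node an $\F$-symbol of $\A$ realizing the transition used; hence $L(\hat{\A},p)=\{\hat v \mid v\in L(\A,p)\}$ for every state $p$. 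Third, if $a$ and $b$ lie in the same component of $G$, there are directed paths in $G$ from $a$ to $b$ and back, so for every context $C$ and hedge $w$ both $C[a(w)]\lrstep{*}{\ptrs{\R}{\A}}C[b(w)]$ and $C[b(w)]\lrstep{*}{\ptrs{\R}{\A}}C[a(w)]$ hold, by chaining renaming rules of $\R$.

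For loopfreeness I would argue by contradiction: a renaming loop $c_1(x)\to\cdots\to c_n(x)$ of $\ptrs{\hat{R}}{\hat{A}}$ with $c_1=c_n$ and all steps non-trivial yields, for each $i$, a renaming rule $a_i(x)\to b_i(x)$ of $\R$ with $\hat{a_i}=c_i\neq c_{i+1}=\hat{b_i}$. Each such rule gives an edge $\langle a_i,b_i\rangle$ in $G$, and together with the within-component paths $c_i\rightsquigarrow a_i$ and $b_i\rightsquigarrow c_{i+1}$ this produces a directed cycle through $c_1,\ldots,c_{n-1}$ in $G$. All these representatives would then lie in a single component and hence be equal, contradicting $c_1\neq c_2$.

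The forward implication follows by projecting the derivation one step at a time. Every update rule has left-hand side $a(x)$, so a step $g_1\lrstep{}{\ptrs{\R}{\A}}g_2$ rewrites a subtree $a(w)$ using a rule $a(x)\to r$ with each occurrence of a state $p$ in $r$ replaced by a hedge of $L(\A,p)$. Applying $\hat{\cdot}$ and using the first two facts, $\hat{g_1}$ rewrites to $\hat{g_2}$ by the rule $\hat a(x)\to\hat r$ of $\ptrs{\hat{R}}{\hat{A}}$, instantiated with the collapsed hedges, which belong to $L(\hat{\A},p)$; the only exception is when $\hat\ell=\hat r$ — precisely a renaming $a(x)\to b(x)$ with $\hat a=\hat b$ — in which case $\hat{g_1}=\hat{g_2}$ and the step projects to zero steps. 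Chaining over the whole derivation yields $\hat h\lrstep{*}{\ptrs{\hat{R}}{\hat{A}}}\hat{h}'$.

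The backward implication is the main obstacle, because a derivation on collapsed hedges can only be reconstructed up to component-internal renaming. I would establish a lifting step: if $\hat g\lrstep{}{\ptrs{\hat{R}}{\hat{A}}}G'$, then $g\lrstep{*}{\ptrs{\R}{\A}}g'$ for some $g'$ with $\hat{g'}=G'$. Indeed the $\hat R$-step rewrites a subtree $\hat a(w)$ of $\hat g$; the subtree of $g$ at the same position is $b(w')$ with $b$ in the same component as $a$ and $\hat{w'}=w$, so by the third fact I can first rename $b$ to $a$ in $g$, then apply the originating rule $a(x)\to r$ with the states of $r$ instantiated by preimages $v_p\in L(\A,p)$ of the hedges used in $G'$ (which exist by the second fact), reaching $g'$ with $\hat{g'}=G'$. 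Iterating this from $g_0=\hat h$ along a derivation $\hat h\lrstep{*}{\ptrs{\hat{R}}{\hat{A}}}\hat{h}'$ gives $\hat h\lrstep{*}{\ptrs{\R}{\A}}g_m$ with $\hat{g_m}=\hat{h}'$. I would finally bracket this with component-internal renamings from the third fact: $h\lrstep{*}{\ptrs{\R}{\A}}\hat h$ by collapsing $h$, and $g_m\lrstep{*}{\ptrs{\R}{\A}}h'$ because $\hat{g_m}=\hat{h}'$ forces $g_m$ and $h'$ to agree position-wise up to components; composing these three derivations yields $h\lrstep{*}{\ptrs{\R}{\A}}h'$.
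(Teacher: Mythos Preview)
Your argument is correct and is precisely the induction on derivation length that the paper invokes in one line; you have simply spelled out the projection of each $\ptrs{\R}{\A}$-step through $\hat{\cdot}$, the lifting of each $\ptrs{\hat{R}}{\hat{A}}$-step via within-component renamings and preimages in $L(\A,p)$, and the separate loopfreeness check. Nothing in your decomposition departs from the intended approach.
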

\begin{proof}
By induction on the length of derivations.\qed
\end{proof}

\subsection{Rewrite Closure}
The rest of the section is devoted to the proof of the following theorem
of construction of \sCFHA for the forward closure by updates.
\begin{theorem} \label{th:post*-update}
Let $\A$ be a \HA over $\F$, 
and $L$ be the language of $\A_L \in$ \CFHA, 
and $\ptrs{\R}{\A}$ be a loop-free \uPTRS.
There exists an effectively computable \CFHA recognizing $\post^*_{\ptrs{\R}{\A}}(L)$,
of size polynomial in the size of $\ptrs{\R}{\A}$ and $\A_L$ and
exponential in the size of the alphabet $\F$.
%(polynomial for a fixed alphabet).
%and which can be constructed in PTIME in the size of $\ptrs{\R}{\A}$ and $\A_L$.
\end{theorem}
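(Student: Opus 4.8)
The plan is to construct the target \CFHA by a saturation procedure that starts from the \sCFHA presentations of $\A_L$ and of the parameter automaton $\A$, keeps all their states, adds a bounded reservoir of new states, and then closes a transition set under the effect of single rewrite steps until a fixpoint is reached. Throughout I would use the variable-free presentation of \CFHA, namely horizontal transitions $p_1\ldots p_n \to q$ (whose recognized state sequences form context-free languages, read as grammars) and single-symbol vertical transitions $a(q_1)\to q_2$. The loop-freeness hypothesis (the general case being reduced to it by the preceding lemma) is used to make the symbol-level analysis finite. Since inserted material is itself over $\F$ and hence rewritable, I would fold the content closure into the same construction: the states of $\A$ belong to the global automaton, the update transitions below are applied to them as well, and any insertion mentioning a state $p$ of $\A$ is wired to the saturated state recognizing $\post^*(L(\A,p))$. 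As the state set is fixed in advance and we only ever add transitions, this is a single monotone fixpoint.

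I would then treat the six update families by the transition/grammar operations they induce, exploiting that horizontal languages are context-free and closed under the operations used. Renaming $a(x)\to b(x)$ adds, for every vertical transition $a(q_1)\to q_2$, the transition $b(q_1)\to q_2$. Parent addition $a(x)\to b(a(x))$ lifts a tree-state $q$ to a hedge-state and applies $b$ to it; iterating $\PAR$ and $\REN$ builds vertical label-chains that form a regular path language over $\F$, captured by finitely many single-symbol vertical transitions — this is exactly why the output stays within \CFHA rather than needing the full \sCFHA power. Child and sibling addition prepend/append content drawn from the closure of $L(\A,u_i)$, resp. $L(\A,v_i)$, around the current child-hedge, resp. around the current tree; since the rule may fire repeatedly at the same node, the inserted material ranges over a Kleene star of that content language, which I encode by concatenation and star in the horizontal grammars, preserving context-freeness. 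Replacement $a(x)\to u$ and single deletion $a(x)\to x$ both turn a single tree into a hedge (the content of $u$, resp. the child-hedge $h$), realized as a substitution in the horizontal context-free grammars: occurrences of the tree-state for an $a$-rooted tree are replaced by the hedge-language for $\post^*(L(\A,u))$, resp. by its child hedge-state. Substitution preserves context-freeness, so the horizontal component stays \CFHA while the vertical component stays regular.

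For correctness I would prove two inclusions by induction on the length of a rewrite derivation. For soundness, every transition added during saturation simulates a rewrite step, so $L(\B)\subseteq \post^*_{\ptrs{\R}{\A}}(L)$: I read the new transitions backwards, recovering from an accepting run a hedge of $L$ that rewrites to the input, exactly as in the proof of Theorem~\ref{th:post*-monadic}. For completeness, if $h \lrstep{*}{\ptrs{\R}{\A}} h'$ with $h\in L$, then $h'$ is accepted, by induction on derivation length, checking that each of the six rule shapes, applied at any position and to any node (including a freshly inserted one), is matched by a transition present at the fixpoint. The delicate point is that a node created by $\PAR$ is immediately eligible for all rules again — it can be renamed, receive children and siblings, acquire its own parent, or be deleted or replaced — and that deletion can promote inserted children into the surrounding sibling level; using single-symbol vertical transitions indexed by $\F$ and handling original and $\A$-states uniformly makes this bookkeeping automatic.

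The main obstacle I expect is precisely this mutual recursion among the rule families, together with the size bound. Termination of the saturation and the claimed complexity — polynomial in $\ptrs{\R}{\A}$ and $\A_L$, exponential only in $|\F|$ — require the new states to come from a fixed finite reservoir. The exponential factor arises from a subset construction over $\F$: to describe the chains producible by interleaved $\REN$ and $\PAR$, and the set of labels whose insertions can still fire at a created node, I would index the new tree- and hedge-states by a subset of $\F$ together with a state of $\A_L$ or $\A$. Showing that this reservoir suffices — that no genuinely new context-free horizontal structure beyond what substitution and star already supply is ever forced — is the crux, and is where loop-freeness and the separation of $\CHILD$ (children only) from $\SIB$ (siblings only), illustrated by Example~\ref{ex:gen}, are essential to prevent the correlated growth that would escape \CFHA.
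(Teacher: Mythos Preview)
Your overall strategy --- merge $\A_L$ and $\A$ into one automaton, allocate a bounded reservoir of fresh states, and saturate the transition set rule by rule until a fixpoint --- is exactly the paper's, and the correctness/completeness argument by induction on derivation length is also the same. Where you diverge is in the concrete state reservoir and completion rules, and your first-pass treatment of $\REN$ has a gap that your later ``subset of $\F$'' idea does not visibly repair.

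The gap is the interaction of renaming with $\CHILD$ and $\SIB$. Writing ``$\REN$ adds, for every $a(q_1)\to q_2$, the transition $b(q_1)\to q_2$'' is too coarse: after the node is relabelled $b$, the $\CHILD$ rules for $a$ no longer fire there while those for $b$ now do, so the child-state cannot stay the same $q_1$ (and if $q_1$ is shared with other vertical transitions you pollute those too). You later propose indexing fresh states by a subset of $\F$, but you never say what the subset encodes nor how it exposes the \emph{current} label that governs which rules apply; without that, neither soundness nor the shape of the saturation is clear.

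The paper's device is to index by \emph{renaming chains} $a_1\ldots a_n$ (sequences with each $a_i(x)\to a_{i+1}(x)\in\R$; loop-freeness bounds their length by $|\F|$, whence the exponential), after first normalising the merged automaton so that every pair $(a,q)$ has a unique child-state $\stack{a}{q}$. Crucially there are \emph{two modes} per chain and base state: a push state $\stack{a_1\ldots a_n}{q}$ for the children hedge and a pop state $\ustack{a_1\ldots a_n}{q}$ for the whole subtree, linked by $a_n(\stack{a_1\ldots a_n}{q})\to\ustack{a_1\ldots a_n}{q}$ and $\ustack{a_1}{q}\to q$. Every added transition is keyed on the \emph{last} symbol $a_n$ of the chain: $\REN$ extends the chain on the push side and contracts it on the pop side; $\CHILD$ adds $u\,\stack{\vec a}{q}\,v\to\stack{\vec a}{q}$; $\SIB$ adds $u\,\ustack{\vec a}{q}\,v\to\ustack{\vec a}{q}$; $\PAR$ adds $b(\ustack{\vec a}{q})\to\ustack{\vec a}{q}$; $\RPL$ adds $u\to\ustack{\vec a}{q}$; $\DEL$ adds $\stack{\vec a}{q}\to\ustack{\vec a}{q}$. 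This push/pop discipline is exactly what separates ``what may happen to the children'' from ``what may happen at the node's own level''. Your phrase ``tree- and hedge-states'' hints at it, and your diagnosis of the delicate point (nodes created by $\PAR$ are again eligible for all rules) is right, but the saturation operations you list --- Kleene star, grammar substitution --- are not tied to a current-label discipline, and that is where the real work sits. (In a loop-free renaming DAG a chain is in fact determined by its vertex set, so a subset index \emph{could} be made to work, but you would still need the two modes and to recover the last label from the subset to drive the completion.)
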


\noindent
%
%\subsection{Construction}
The construction of the \CFHA works in 2 steps: 
construction of an initial automaton and completion loop.
\begin{ABS}
We shall use the following notion in order to simplify the proof:
\end{ABS}
\begin{RR}
We need first a notion of normalization of \CFHA in order to simplify the proofs:
\end{RR}
a \CFHA $\<\F, Q, Q^\final, \Delta>$  is called
\emph{normalized} if for all $a \in \F$ and $q \in Q$, 
there exists one unique state of $Q$ denoted $\initial{q}{a}$
%(the notation with 2 $a$'s is explained afterwards)
such that $a(\initial{q}{a}) \to q \in \Delta$,
and moreover, $\initial{q}{a}$ does neither occur in a
left hand side of an horizontal transition of~$\Delta$
nor in a right hand side of a vertical transition of $\Delta$.
\begin{ABS}
With some state renaming, every \CFHA $\A$ can be transformed in PTIME into a normalized \CFHA $\A'$, 
of size linear in the size of $\A$, and such that $L(\A') = L(\A)$.
\end{ABS}
\begin{RR}
\begin{lemma}[Normalization]
For all \CFHA $\A$, there exists a normalized \CFHA $\A'$ 
such that $L(\A') = L(\A)$, 
of size linear in the size of $\A$ and which can be constructed in PTIME.
\end{lemma}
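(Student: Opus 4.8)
The plan is to give a direct construction that, for every pair $(a,q)\in\F\times Q$, provides a dedicated \emph{entry state} $\initial{q}{a}$ gathering, under the label $a$, exactly those children hedges that $\A$ can reduce to some state $p$ with $a(p)\to q$. Starting from the \CFHA $\A=\<\F, Q, Q^\final, \Delta>$ presented with variable-free transitions $p_1\ldots p_n\to q$ and $a(p)\to q$, I would set $Q'=Q\uplus\{\initial{q}{a}\mid a\in\F,\ q\in Q\}$ with the $\initial{q}{a}$ fresh, keep ${Q'}^{\final}=Q^\final$, keep every horizontal transition of $\Delta$ unchanged, and rework the vertical transitions as follows: delete every vertical transition $a(p)\to q$ of $\Delta$; for each such deleted transition add the unit horizontal transition $p\to\initial{q}{a}$; and for every pair $(a,q)$ add the single vertical transition $a(\initial{q}{a})\to q$. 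Call the result $\A'=\<\F, Q', {Q'}^{\final}, \Delta'>$.

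The three normalization conditions then hold by construction and by freshness of the $\initial{q}{a}$. For each $(a,q)$ the transition $a(\initial{q}{a})\to q$ is the unique vertical transition of $\Delta'$ with top symbol $a$ and right-hand side $q$, since all the original ones were deleted and $\initial{q}{a}$ is fresh; the state $\initial{q}{a}$ never occurs in the left-hand side of a horizontal transition, because the only new horizontal transitions $p\to\initial{q}{a}$ place it on the right and the inherited horizontal transitions do not mention it; and $\initial{q}{a}$ never occurs in the right-hand side of a vertical transition, because the single vertical transition mentioning it has $\initial{q}{a}$ strictly below $a$ and has the original state $q\in Q$ as its right-hand side.

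The heart of the proof is the language-preservation claim, which I would establish through the stronger pair of statements, proved simultaneously by induction on a ground hedge $h\in\H(\F)$: (i) for every $s\in Q$, $h\lrstep{*}{\A}s$ iff $h\lrstep{*}{\A'}s$; and (ii) $h\lrstep{*}{\A'}\initial{q}{a}$ iff $h\lrstep{*}{\A}p$ for some $p$ with $a(p)\to q\in\Delta$. The driving observation is the normal-form fact that a tree $a(h)$ reduces to a single state $q$ in a \CFHA exactly when its children reduce to a single state $p$ with $a(p)\to q$ a vertical transition, since the rewrite step eliminating the root $a$ must be such a vertical transition. In $\A$ this reads $h\lrstep{*}{\A}p$ with $a(p)\to q\in\Delta$; in $\A'$ it reads $a(h)\lrstep{*}{\A'}a(\initial{q}{a})\lrstep{}{\A'}q$, and, the only way to produce the leaf $\initial{q}{a}$ being a unit transition $p\to\initial{q}{a}$, we get $h\lrstep{*}{\A'}\initial{q}{a}$ iff $h\lrstep{*}{\A'}p\lrstep{}{\A'}\initial{q}{a}$ for some $p$ with $a(p)\to q\in\Delta$. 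Applying the induction hypothesis to the strictly smaller $h$ closes both directions and both claims. What makes this go through is the \emph{confinement} of the fresh states: since $\initial{q}{a}$ occurs only on the left of the unique vertical transition $a(\initial{q}{a})\to q$ and never in a horizontal left-hand side, once it is produced the only applicable rule is that vertical transition, so it can neither shortcut nor disturb horizontal computations, guaranteeing that no spurious reductions are created and none are lost. Since ${Q'}^{\final}=Q^\final\subseteq Q$, claim (i) yields $L(\A')=L(\A)$.

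For the complexity I would note that the construction adds $|\F|\cdot|Q|$ fresh states, one vertical transition per pair $(a,q)$, and one unit horizontal transition per original vertical transition, so $\A'$ has size linear in that of $\A$ and is clearly computable in PTIME. The step I expect to be the main obstacle is making the normal-form argument for reductions fully rigorous: justifying that every reduction of a tree $a(h)$ to a state $q$ can be reorganized so that the children are first contracted to a single state before the root-eliminating vertical step — in the possible presence of unit horizontal transitions and of interleaved contractions in sibling positions — and then combining this with the careful bookkeeping, via confinement, that the fresh entry states add and remove no reductions.
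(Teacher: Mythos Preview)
Your construction is correct and is exactly the natural one; the paper itself does not give a proof of this lemma beyond the one-line remark that it can be done ``with some state renaming,'' so there is nothing substantive to compare against. Your language-preservation argument via the normal-form observation (that eliminating the root $a$ of $a(h)$ in a \CFHA requires first reducing $h$ to a single state and then applying a vertical transition) and the confinement of the fresh states is the right way to make the claim rigorous.

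One small point on the size bound: you add $|\F|\cdot|Q|$ fresh states and as many new vertical transitions, and then assert this is linear in the size of $\A$. Strictly speaking, $|\F|\cdot|Q|$ need not be bounded linearly by $|\F|+|Q|+|\Delta|$; the claim goes through if the alphabet is treated as fixed (which is how the paper uses the lemma downstream, where the final bound is anyway exponential in $|\F|$), or if the size of $\A$ is taken to include an explicit $\F\times Q$ table. It is worth saying which convention you adopt. Note that you cannot avoid the $|\F|\cdot|Q|$ blow-up by only creating $\initial{q}{a}$ for pairs that actually occur in some vertical transition of $\Delta$, since the definition of \emph{normalized} requires such a state for every pair $(a,q)$.
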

%
%\begin{proof}
%...with completion and some state renaming\qed
%\florent{TBC?}
%\end{proof}
\end{RR}

%\subsubsection{Initial automaton.}
\paragraph{Initial automaton.}
Let $\A = \< \F, Q_\A, Q_\A^\final, \Delta_\A>$
and $\A_L = \< \F, Q_L, Q_L^\final, \Delta_L>$.
We assume that the state sets $Q_\A$ and $Q_L$ are disjoint.
\begin{RR}
We will construct a \sCFHA $\A'$ 
for the recognition of $\post_{\ptrs{\R}{\A}}^*(L)$.
\end{RR}

First, 
\begin{RR}
in order to simplify the construction, 
\end{RR}
let us merge $\A$ and $\A_L$ into 
a \CFHA $\B = \<\F, P, P^\final, \Gamma>$
obtained by the normalization of 
$\<\F, Q_\A \uplus Q_L, Q_L^\final, \Delta_\A \uplus \Delta_L>$.
Below, the states of $P$ will be %indifferently 
denoted by the letters $p$ or $q$.
Let $\Pin$ be the subset of states of $P$ 
of the form $\initial{q}{a}$
(remember that $\initial{q}{a}$ is a state of $P$ 
 uniquely characterized by $a \in \F$, $q \in P$, since $\B$ is normalized).
We assume \textit{wlog} that $\Pin$ and $P^\final$ are disjoint
and that $\B$ is \emph{clean}, \textit{i.e.} 
for all $p \in P$, $L(\B, p) \neq \emptyset$.

%\paragraph{Preliminary transformation}
Next, in a preliminary construction step, we transform the initial automaton 
$\B$ into a \CFHA $\A_0 = \<\F, Q, Q^\final, \Delta_{0}>$.
Let us call \emph{renaming chain} a sequence $a_1,\ldots,a_n$ of symbols of $\F$ 
such that $n \geq 1$ for all $1 \leq i < n$, $a_i(x) \to a_{i+1}(x) \in \R$.
Since $\R$ is loop-free, the length of every renaming chains is bounded by $|\F|$.
The fresh state symbols of $Q$ are 
defined as extensions of the symbols of $P \setminus \Pin$
with renaming chains.
%stacks of symbols of $\Sigma$ (which can the empty stack $\emptyhedge$),
%whose depth is bounded by the max loop length $\ell\ell_\R$ of $\ptrs{\R}{\A}$. 
We consider two modes for such states:
the \emph{push} and \emph{pop} modes, characterized by 
%the symbols $\uparrow$ and $\downarrow$.
a chain respectively in superscript or subscript.
\[ 
Q = P \cup \{ \ustack{a}{q} \mid \stack{a}{q} \in \Pin \} 
      \cup 
 \left\{ 
 \begin{array}{l}
 \stack{a_1\ldots a_n}{q}  \bigm| q \in P \setminus \Pin, 
                                  n \geq 2,\\
 %                                 a_1,\ldots, a_n \in \F, 2 \leq n \leq \ell\ell_\R, \\
 \ustack{a_1\ldots a_n}{q} \bigm| 
 %\forall 1 < i \leq n, a_{i}(x) \to a_{i-1}(x) \in \ptrs{\R}{\A}\\
 a_1,\ldots, a_n \; \mathrm{is~a~renaming~chain}\\
 \end{array}
\right\}
\]
Let 
$Q^\final = P^\final$ be the subset of final states.
Intuitively, in the state $\stack{a_1\ldots a_n}{q}$, 
the chain of $\Sigma^+$ represents a sequence 
of renamings, with $\ptrs{\R}{\A}$, of the parent of the current symbol,
starting with $a_1$ and ending with $a_n$.
Note that the states of $\Pin$ are particular cases of such states, 
with a chain of length one.
A state $\ustack{a_1\ldots a_n}{q}$
will be used below to represent the tree $a_n(\stack{a_1\ldots a_n}{q})$.
%By abuse of notations, 
%we make no distinction below between a state $q \in P \setminus \Pin$ 
%and the state $\ustack{\emptyhedge}{q}$, 

\noindent
The initial set of transitions $\Delta_0$ is defined as follows
%\[
%\Delta_0 = \Gamma \cup 
% \left\{  
%\begin{array}{rclcr}
% a_1\bigl( \stack{a_1\ldots a_n}{q} \bigr) & \to & \ustack{a_1\ldots a_n}{q} & 
% \bigm| & \stack{a_1\ldots a_n}{q} \in Q\\
%%
% \ustack{a_1\ldots a_n}{q} & \to & q & 
% \bigm| & \ustack{a_1\ldots a_n}{q} \in Q 
%\end{array}
%\right\}
%\]
\[
\begin{array}{rcl}
\Delta_0 = \Gamma_h  & \cup & \{ \ustack{a_1}{q} \to q \mid \ustack{a_1}{q} \in Q \}\\
& \cup &
\{  
 a_n\bigl( \stack{a_1\ldots a_n}{q} \bigr) \to \ustack{a_1\ldots a_n}{q} 
\mid \stack{a_1\ldots a_n}{q}, \ustack{a_1\ldots a_n}{q} \in Q, n \geq 1
\}
\end{array}
\]
where $\Gamma_h$ is the subset of horizontal transitions of $\Gamma$.
Note that $\A_0$ is not normalized.
\noindent
The following lemma is immediate by construction of $\Gamma$ and $\A_0$.
%\begin{lemma} \label{lem:A0}
%For all $\vec{s} \in Q^*$, $s \in Q$
%$\vec{s} \lrstep{*}{\A_0} s$ iff
%$\st(\vec{s}) \lrstep{*}{\B} \st(s)$.
%\end{lemma}
%
%\begin{corollary} \label{cor:A0}
%For all $h \in \H(\F)$, $q \in Q$
%$h \lrstep{*}{\A_0} q$ iff
%$h \lrstep{*}{\B} \st(q)$.
%\end{corollary}

\begin{lemma} \label{cor:A0}
%$L(\A_0) = L$
% states
For all $q \in Q_\A$ (resp.  $q \in Q_L$)
$L(\A_0, q) = L(\A, q)$ 
(resp. $L(\A_L, q)$).
\end{lemma}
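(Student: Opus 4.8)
The plan is to factor the claim through the intermediate automaton $\B$. First I would note that the disjoint union $\<\F, Q_\A \uplus Q_L, Q_L^\final, \Delta_\A \uplus \Delta_L>$ recognizes $L(\A,q)$ at every $q\in Q_\A$ and $L(\A_L,q)$ at every $q\in Q_L$: since $\Delta_\A$ and $\Delta_L$ act on disjoint state sets and a transition producing a $Q_\A$-state reads only states of $Q_\A\cup\F$ (symmetrically for $Q_L$), any reduction of a ground hedge over $\F$ ending in such a state uses transitions of a single component, so the two do not interfere. As the normalization leading to $\B$ preserves recognized languages and retains the original states with their state languages, only adding the fresh pre-states $\Pin$, we obtain $L(\B,q)=L(\A,q)$ for $q\in Q_\A$ and $L(\B,q)=L(\A_L,q)$ for $q\in Q_L$. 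It therefore suffices to prove $L(\A_0,q)=L(\B,q)$ for every $q\in P\setminus\Pin$.

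Next I would compare the reductions of $\A_0$ and $\B$ step by step. The two automata share exactly the horizontal transitions $\Gamma_h$ and differ only on vertical steps. By normalization every vertical transition of $\B$ has the form $a(\stack{a}{q})\to q$, and $\A_0$ splits this single step into the two steps $a(\stack{a}{q})\to\ustack{a}{q}$ and $\ustack{a}{q}\to q$ through the pop state $\ustack{a}{q}$ (these are the instances $n=1$ of the rules $a_n(\stack{a_1\ldots a_n}{q})\to\ustack{a_1\ldots a_n}{q}$ and $\ustack{a_1}{q}\to q$). Hence any $\B$-reduction to $q$ is turned into an $\A_0$-reduction by replacing each vertical step with its two-step decomposition. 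Conversely, since $\ustack{a}{q}$ appears on the left of no rule other than $\ustack{a}{q}\to q$ and is read by neither a horizontal nor a vertical rule, any occurrence of $\ustack{a}{q}$ in an $\A_0$-reduction ending in $q\in P$ must be produced by $a(\stack{a}{q})\to\ustack{a}{q}$ and then consumed by $\ustack{a}{q}\to q$; these steps always pair up and collapse back to the $\B$-vertical step, giving a faithful bidirectional correspondence of reductions.

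The real point to check — and the only genuine obstacle — is that the chain-extended states that $\A_0$ carries for the sake of the later completion loop are inert at this stage. Inspecting $\Delta_0$, no rule has a superscript chain state $\stack{a_1\ldots a_n}{q}$ with $n\geq 2$ as its right-hand side: the targets of $\Gamma_h$ lie in $P$, the rules $\ustack{a_1}{q}\to q$ target $P$, and the rules $a_n(\stack{a_1\ldots a_n}{q})\to\ustack{a_1\ldots a_n}{q}$ target subscript states. Consequently the states $\stack{a_1\ldots a_n}{q}$ with $n\geq 2$ are unreachable under $\lrstep{*}{\A_0}$, the vertical rules reading them never fire, and the pop states $\ustack{a_1\ldots a_n}{q}$ with $n\geq 2$ are equally unreachable (and are in any case the source of no rule of $\Delta_0$). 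These states thus contribute to no reduction ending in $q\in P$, so together with the pairing of the previous paragraph this yields $L(\A_0,q)=L(\B,q)$ for $q\in P\setminus\Pin$, and with the first paragraph the lemma follows.
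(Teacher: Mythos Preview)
Your argument is correct and follows the same line as the paper's own proof, only spelled out in full detail: the paper's proof is the two-sentence sketch that each vertical transition $a(\stack{a}{q})\to q$ of $\Gamma$ is simulated by the pair $a(\stack{a}{q})\to\ustack{a}{q}\to q$, and that all states $\stack{a_1\ldots a_n}{q}$, $\ustack{a_1\ldots a_n}{q}$ with $n\geq 2$ are empty for $\A_0$. Your explicit factorization through $\B$ and the pairing argument for the converse direction are exactly the content the paper elides.
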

\begin{proof}
Every vertical transition in $\Gamma$ has the form
$a(\stack{a}{q}) \to q$ and can be simulated by the 2 steps
$a(\stack{a}{q}) \to \ustack{a}{q} \to q$.
Moreover, all the states $\stack{a_1\ldots a_n}{q}$ and $\ustack{a_1\ldots a_n}{q}$
with $n \geq 2$ are empty for $\A_0$.
\qed
\end{proof}

\noindent
For the construction of $\A'$,
we shall complete incrementally $\Delta_0$ into $\Delta_1$, $\Delta_2$,...
by adding some transition rules, according to a case analysis 
of the rules of $\ptrs{\R}{\A}$.
For each construction step $i \geq 0$, we let
\( \A_i = \<\F, Q, Q^\final, \Delta_i> \).

%\subsubsection{Automata completion.}
\paragraph{Automata completion.}
The construction of the sequence $(\Delta_i)$
works by iteration of a case analysis of the rewrite rules of $ \ptrs{\R}{\A}$, 
presented in Table~\ref{tab:completion}.
Assuming that $\Delta_i$ is the last set built, 
we define its extension $\Delta_{i+1}$ 
by application of the first case in Table~\ref{tab:completion} such that 
$\Delta_{i+1} \neq \Delta_i$. 
In the rules of Table~\ref{tab:completion}, %$w$ denotes a sequence of $\Sigma^*$,
$a_1,\ldots,a_n, b$ are symbols of $\F$, and $u, v$ are sequences of $Q_\A^*$.
%and $q \in P \setminus \Pin$.

\begin{table}
\[
\begin{array}{|c|l|l|}
\hline
 & \ptrs{\R}{\A}\;\mathrm{contains} & 
\multicolumn{1}{c|}{\Delta_{i+1} = \Delta_i \cup}\\
\hline
(\REN) & \; a_n(x) \to b(x) & 
\begin{array}{cl}
      & \{ \stack{a_1\ldots a_n}{q} \to \stack{a_1\ldots a_n b}{q} 
           \mid \stack{a_1\ldots a_n b}{q} \in Q  \}\\
 \cup & \{ \ustack{a_1\ldots a_n b}{q} \to \ustack{a_1\ldots a_n}{q} 
           \mid \ustack{a_1\ldots a_n b}{q} \in Q  \}
\end{array}  
\\[2mm]
(\CHILD) & \; a_n(x) \to a_n(u\, x \, v) & 
\quad \{ u \, \stack{a_1\ldots a_n}{q} \, v \to \stack{a_1\ldots a_n}{q}
         \mid \stack{a_1\ldots a_n}{q} \in Q \} 
\\[1mm]
(\SIB) & \; a_n(x) \to u\, a_n(x) \, v & 
\quad \{ u \, \ustack{a_1\ldots a_n}{q} \, v \to \ustack{a_1\ldots a_n}{q}
         \mid \ustack{a_1\ldots a_n}{q} \in Q \} 
\\[1mm]
(\PAR) & \; a_n(x) \to b\bigl(a_n(x)\bigr) & 
\quad \{ b\bigl(\ustack{a_1\ldots a_n}{q}\bigr) \to \ustack{a_1\ldots a_n}{q}
         \mid \ustack{a_1\ldots a_n}{q} \in Q \} 
\\[1mm]
(\RPL) & \; a_n(x) \to u & 
\quad \{ u \to \ustack{a_1\ldots a_n}{q} \mid \ustack{a_1\ldots a_n}{q} \in Q \} 
%CONDITION (USELESS) L(\A_0, \stack{a_1}{q}) \neq \emptyset
\\[1mm]
(\DEL) & \; a_n(x) \to x & 
\quad \{ \stack{a_1\ldots a_n}{q} \to \ustack{a_1\ldots a_n}{q}
         \mid \stack{a_1\ldots a_n}{q} \in Q \} 
\\[1mm]
\hline
\end{array}
\]
\caption{\CFHA Completion}
\label{tab:completion}
\end{table}

%Note that the above construction, in the cases of insertion rules,
%add new looping transitions which summarize several insertions.
%Such constructions are comparable to \emph{acceleration} techniques 
%used in model checking~\citep{Pnueli00livenessand}.
%Moreover, only the set production rules is updated.
%This permits to refer to non-terminals whose associated languages
%will evolve dynamically during the completion.
%These tricks were necessary for the correctness of the construction.
%
Only a bounded number of rules can be added to the $\Delta_i$'s, hence eventually, 
a fixpoint $\Delta_k$ is reached, that we will denote $\Delta'$.
We also write $\A'$ for $\A_k$.

%\begin{ABS}
%...
%\florent{intuition proof correctness}
%\end{ABS}

\begin{RR}
\subsubsection{Correctness.}
\end{RR}
\begin{ABS}
\medskip
\end{ABS}

The following Lemma~\ref{lem:correctness}
shows that the automata computations simulate
the rewrite steps, 
\textit{i.e.}  that $L(\A') \subseteq \post^*_{\ptrs{\R}{\A}}(L)$.
Let us abbreviate $\ptrs{\R}{\A}$ by $\R$.
We use the notation 
$h \lrstep{a_1\ldots a_n}{\R} h'$, 
for a renaming chain $a_1,\ldots,a_n$ ($n \geq 1$), 
%
%1/ 
%if there exists $h_1$, $h_n$ such that
%$h = a_1(h_1) \lrstep{*}{\R} a_n(h_n) \lrstep{*}{\R} h'$.
%
%2/ 
if there exists $h_1, \ldots h_{n} \in \H(\F)$ such that
\[
h = a_1(h_1) 
\lrstep{*}{\R}  a_1(h_2) 
\lrstep{}{\REN} a_{2}(h_2)
\lrstep{*}{\R} \ldots 
\lrstep{*}{\R}  a_{n-1}(h_n)
\lrstep{}{\REN} a_n(h_n) \lrstep{*}{\R} h'
\]
%\ (\dag) 
where the reductions denoted $\lrstep{}{\REN}$ are 
rewrite steps with rules of $\ptrs{\R}{\A}$ of type $(\REN)$,
applied at the positions of $a_1$,\ldots, $a_n$,
and all the other rewrite steps (denoted $\lrstep{*}{\R}$)
involve no rule of type $(\REN)$.
%are not performed at the root position.
%
%3/ 
%We use the notation 
%$a_1(h_1) \lrstep{a_1\ldots a_n}{\R} h$, with $n \geq 1$, 
%iff there
%exists $h_2, \ell_2, r_2, \ldots h_{n}, \ell_{n}, r_{n} \in \H(\F)$ such that
%\[
%a_1(h_1) 
%\lrstep{*}{\R} \ell_{2}\, a_1(h_{2})\, r_{2} 
%\lrstep{}{\REN} \ell_{2}\, a_{2}(h_{2})\, r_{2} 
%\lrstep{*}{\R} \ldots 
%\lrstep{*}{\R} \ell_{n}\, a_{n-1}(h_{n})\, r_{n} 
%\lrstep{}{\REN} \ell_{n}\, a_n(h_{n})\, r_{n} = h
%\]
%%\ (\dag) 
%where the reductions denoted $\lrstep{}{\REN}$ are 
%rewrite steps with rules of $\ptrs{\R}{\A}$ of type $\REN$
%applied at the positions of $a_n$,\ldots, $a_2$.
%We write $a_1(h_1) \lrstep{a_1\ldots a_n\ \emptyhedge}{\R} h$
%when the above $\ell_2, r_2, \ldots, \ell_n, r_n $ are all equal to $\emptyhedge$.

%The following lemma shows the correctness of the construction.
%\newpage
\begin{lemma}[Correctness] \label{lem:correctness}
For all $h \in \H(\F)$,
\begin{description}
\item[$i.$]
if $h \lrstep{*}{\A'} \ustack{a_1\ldots a_n}{q}$, with $n \geq 1$, 
then there exists $h_1 \in \H(\F)$ such that\\
$a_1(h_1) \lrstep{*}{\B} q$ and $a_1(h_1) \lrstep{a_1\ldots a_n}{\R} h$,
\item[$ii.$]
if $h \lrstep{*}{\A'} \stack{a_1\ldots a_n}{q}$, with $n \geq 1$, 
then there exists $h_1 \in \H(\F)$ such that\\
$h_1 \lrstep{*}{\B} \stack{a_1}{q}$, 
and 
$a_1(h_1) \lrstep{a_1\ldots a_n}{\R} a_n(h)$,
%\(
%a_1(h_1) 
%\lrstep{*}{\R} a_1(h_{2}) 
%\lrstep{}{\REN} a_{2}(h_{2})
%\lrstep{*}{\R} \ldots 
%\lrstep{*}{\R} a_{n-1}(h_{n})
%\lrstep{}{\REN} a_n(h_{n})
%\)
%and $h_n = h$,
%
\item[$iii.$]
if $h \lrstep{*}{\A'} q\in P\setminus \Pin$, 
then there exists $h' \in \H(\F)$ such that\\
$h' \lrstep{*}{\B} q$ and $h' \lrstep{*}{\R} h$.
\end{description}
\end{lemma}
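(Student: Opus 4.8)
The plan is to prove all three statements $i$, $ii$, $iii$ simultaneously by induction on the length of the reduction $h \lrstep{*}{\A'} s$, where $s$ is the target state. A simultaneous induction is essential because the three kinds of target states ($\ustack{a_1\ldots a_n}{q}$, $\stack{a_1\ldots a_n}{q}$, and $q \in P \setminus \Pin$) are tightly interdependent through the transition rules: a push-mode state $\stack{a_1\ldots a_n}{q}$ is converted to a pop-mode state $\ustack{a_1\ldots a_n}{q}$ via a vertical transition $a_n(\stack{a_1\ldots a_n}{q}) \to \ustack{a_1\ldots a_n}{q}$, and pop-mode states collapse back to ordinary states $q$ via $\ustack{a_1}{q} \to q$. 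So each clause will feed the others in the inductive step.

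First I would set up the induction carefully and handle the base case, which amounts to checking the claim when the last transition applied is one of the original transitions in $\Gamma_h$ or one of the structural transitions of $\Delta_0$; here the corresponding rewrite sequence $a_1(h_1) \lrstep{a_1\ldots a_n}{\R} h$ is trivial (empty chain of renamings, $n = 1$, and no non-$(\REN)$ steps), and the $\B$-reachability follows directly from Lemma~\ref{cor:A0} and the construction of $\Gamma$. For the inductive step, I would perform a case analysis on the \emph{last} transition rule used in the computation $h \lrstep{*}{\A'} s$, distinguishing the six completion rules of Table~\ref{tab:completion} together with the basic transitions of $\Delta_0$. For each completion rule I must peel off that last step, apply the induction hypothesis to the shorter computation reaching the premise state, and then reconstruct the rewrite witness by inserting the appropriate update step. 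For instance, if the last rule is the $(\CHILD)$-rule $u \, \stack{a_1\ldots a_n}{q} \, v \to \stack{a_1\ldots a_n}{q}$, then $h = h_u \, h_0 \, h_v$ where $h_u, h_v$ reduce to the hedges coded by $u, v$ and $h_0$ reaches $\stack{a_1\ldots a_n}{q}$; I invoke clause $ii$ on $h_0$, then prepend and append the inserted children using an $(\CHILD)$ rewrite step of $\R[\A]$, being careful that $u, v \in Q_\A^*$ are replaced by actual hedges from the respective languages $L(\A, \cdot)$. The $(\PAR)$ case is the genuinely new one relative to prior work: here the last rule is $b(\ustack{a_1\ldots a_n}{q}) \to \ustack{a_1\ldots a_n}{q}$, which introduces a new parent node $b$, and I must show this matches an $(\PAR)$ rewrite step $a_n(x) \to b(a_n(x))$ applied after the chain of renamings.

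The main obstacle I anticipate is correctly managing the bookkeeping of the renaming chains $a_1 \ldots a_n$ recorded in the superscript/subscript of the states, and in particular showing that the $(\REN)$ completion rules faithfully extend the chain. When the last step is the $(\REN)$-rule $\stack{a_1\ldots a_n}{q} \to \stack{a_1\ldots a_n b}{q}$, the induction hypothesis gives a witness for the chain $a_1 \ldots a_n$, and I must splice in one more renaming step $a_n(\cdot) \lrstep{}{\REN} b(\cdot)$ at the top, thereby extending the relation $\lrstep{a_1\ldots a_n}{\R}$ to $\lrstep{a_1 \ldots a_n b}{\R}$; the subtle point is that the definition of $\lrstep{a_1\ldots a_n}{\R}$ interleaves non-$(\REN)$ reductions \emph{between} consecutive renamings, so I must verify that the push/pop mechanics of the automaton preserve exactly this interleaving discipline and never permit a renaming to be applied out of order or at the wrong node. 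Once the case analysis is laid out, each individual case is a routine (if tedious) matching of an automaton transition against the corresponding line of the update definition, and the result follows by the usual fixpoint/induction argument. Closing the induction then yields $L(\A') \subseteq \post^*_{\ptrs{\R}{\A}}(L)$ as desired, since a final state $q \in Q^\final = P^\final \subseteq P \setminus \Pin$ falls under clause $iii$.
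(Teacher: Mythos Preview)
Your proposal is correct and follows essentially the same approach as the paper: a simultaneous induction on the length of the reduction $h \lrstep{*}{\A'} s$ for clauses $i$, $ii$, $iii$, with a case analysis on the last transition used, covering both the structural rules of $\Delta_0$ and the six completion rules of Table~\ref{tab:completion}. The paper adds one minor technical convenience you do not mention explicitly---it first commutes transitions so that the run factors as $h = t_1\ldots t_m \lrstep{*}{\A'} s_1\ldots s_m \lrstep{*}{\A'} s$ and then analyses the tail $\rho_0$---but this is just a way to organise the same case split you describe, and your treatment of the $(\CHILD)$, $(\PAR)$, and $(\REN)$ cases matches the paper's arguments.
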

\begin{proof}
\begin{ABS}
(sketch)
\end{ABS}
Let $s \in Q$ be such that $h \lrstep{*}{\A'} s$ and let us call $\rho$ this reduction.
With a commutation of transitions, we can assume that $\rho$ has the following form,
%where $\rho_0$ is a sub-derivation,
\[
%\underbrace{h  = t_1\ldots t_m \lrstep{*}{\A'} \underbrace{s_1\ldots s_m \lrstep{*}{\A'} s}_{\textstyle\rho_0}}_{\textstyle\rho}
\rho: h  = t_1\ldots t_m \lrstep{*}{\A'} \underbrace{s_1\ldots s_m \lrstep{*}{\A'} s}_{\textstyle\rho_0}
\]
where $t_1,\ldots,t_m \in \T(\F)$, $s_1,\ldots,s_m \in Q$,
%assuming moreover \textit{wlog} that 
and for all $1 \leq i \leq m$,
$t_i \lrstep{*}{\A'} s_i$, and the last step of this reduction
involves a vertical transition
$a(\stack{a_1\ldots a_n}{q}) \to s_i$ or
$b(\ustack{a_1\ldots a_n}{q}) \to s_i$.
\noindent
The proof is by induction on the length of $\rho$.

\medskip\noindent
The shortest possible $\rho$
has 2 steps: $h = t_1 = a(\emptyhedge) \lrstep{}{\A_0} a(\stack{a}{q}) \lrstep{}{\A_0} q = s$
and ($iii$) holds immediately with $h' = h$, by Lemma~\ref{cor:A0}.

\medskip\noindent
For the induction step, we consider the length of $\rho_0$.
\begin{RR}

\end{RR}
\noindent
If $|\rho_0| = 0$, 
we have necessarily $m=1$, and the reduction $\rho$ has one of the two following forms
($\vec{v} \in Q^*$).
\begin{eqnarray}
h & = & t_1 = b(h') \lrstep{*}{\A'} b(\vec{v}) \lrstep{*}{\A'} 
b(\ustack{a_1\ldots a_n}{q}) \lrstep{}{\A'} \ustack{a_1\ldots a_n}{q} = s_1 = s
\label{eq:completion-par}\\
h & = & t_1 = a_n(h') \lrstep{*}{\A'} a_n(\vec{v}) \lrstep{*}{\A'} 
a_n(\stack{a_1\ldots a_n}{q}) \lrstep{}{\A_0} \ustack{a_1\ldots a_n}{q} = s_1 = s
\label{eq:completion-0}
\end{eqnarray}

In the case (\ref{eq:completion-par}), 
assume that the vertical transition
$b(\ustack{a_1\ldots a_n}{q}) \to \ustack{a_1\ldots a_n}{q}$ 
has been added to $\A'$ because $\ptrs{\R}{\A}$ contains a rule 
$a_n(x) \to b\bigl(a_n(x)\bigr)$.
By induction hypothesis ($i$) applied to the sub-reduction
$h' \lrstep{*}{\A'} \ustack{a_1\ldots a_n}{q}$,
there exists $h_1 \in \H(\F)$ such that
$a_1(h_1) \lrstep{*}{\B} q$,
and $a_1(h_1) \lrstep{a_1\ldots a_n}{\R} h'$.
%Hence $a_1(h_1) \lrstep{*}{\B} q$, 
It follows in particular that there exists $h_n$ such that
$a_n(h_n) \lrstep{*}{\R} h'$, 
and using the above $(\PAR)$ rewrite rule, 
$a_n(h_n) \lrstep{}{\R} b\bigl(a_n(h_n)\bigr) \lrstep{*}{\R} b(h') = h$.
Therefore, $a_1(h_1) \lrstep{a_1\ldots a_n}{\R} h$
and ($i$) holds for $h$ and $s$.

In the case (\ref{eq:completion-0}), 
by induction hypothesis ($ii$) applied to the sub-reduction
$h' \lrstep{*}{\A'} \stack{a_1\ldots a_n}{q}$,
there exists $h_1 \in \H(\F)$ such that
$h_1 \lrstep{*}{\B} \stack{a_1}{q}$,
hence $a_1(h_1) \lrstep{*}{\B} q$,
and $a_1(h_1) \lrstep{a_1\ldots a_n}{\R} a_n(h') = h$.
Therefore ($i$) holds for $h$ and $s$.

\medskip\noindent
Assume now that $|\rho_0| > 0$, 
and let us analyze the horizontal transition rule used in 
the last step of $\rho_0$.
\begin{ABS}
In order to comply with spaces restrictions, we will present only one significant case in this extended abstract (see~\cite{versionlongue} for the other cases).
\end{ABS}

\begin{RR}
\paragraph{Case $\Delta_0.1$} 
The last step of $\rho_0$ involves 
$q_1 \ldots q_n \to q \in \Gamma_h$ 
(horizontal transition of $\B$), with $n \geq 0$.
In this case, the reduction $\rho$ has the form
\[
h = h_1\ldots h_n \lrstep{*}{\A'} s_1\ldots s_m \lrstep{*}{\A'} 
q_1 \ldots q_n \lrstep{}{\A_0} q = s
\]
with $n \leq m$, $h_i \in \H(\F)$ and 
$h_i \lrstep{*}{\A'} q_i$ for all $i \leq n$.
By induction hypothesis ($iii$) applied to the latter reductions, 
for all $i \leq n$, there exists $h'_i$ such that 
$h'_i \lrstep{*}{\B} q_i$ and $h'_i \lrstep{*}{\R} h_i$.
Hence ($iii$) holds for $h$ and $s$ with $h' = h'_1\ldots h'_n$, 
since 
$h' \lrstep{*}{\B} q_1\ldots q_n \lrstep{}{\B} q$, and 
$h' \lrstep{*}{\R} h$.

\paragraph{Case $\Delta_0.2$} 
The last step of $\rho_0$ uses 
$\ustack{a_1}{q} \to q \in \Delta_0$.
In this case, the reduction $\rho$ has the form
\[
h \lrstep{*}{\A'} \ustack{a_1}{q}  \lrstep{}{\A_0} q = s
\]
\noindent
By induction hypothesis ($i$) applied to $h \lrstep{*}{\A'} \ustack{a_1}{q}$,
there exists $h_1 \in \H(\F)$ such that 
$a_1(h_1) \lrstep{*}{\B} q$ and
$a_1(h_1) \lrstep{a_1}{\R} h$.
Hence, ($iii$) holds with $h' = a_1(h_1)$.

\paragraph{Case $(\REN).1$} 
The last step of $\rho_0$ uses 
$\stack{a_1\ldots a_{n-1}}{q} \to \stack{a_1\ldots a_{n}}{q}$ and 
this transition has been added to $\Delta'$
because $\ptrs{\R}{\A}$ contains a rule $a_{n-1}(x) \to a_n(x)$.
In this case, the reduction $\rho$ has the form
\begin{equation}
h \lrstep{*}{\A'} \stack{a_1\ldots a_{n-1}}{q} \lrstep{}{\A'} \stack{a_1\ldots a_{n}}{q} = s
\label{eq:correctness-REN}
\end{equation}

By induction hypothesis ($ii$) applied to $h \lrstep{*}{\A'} \stack{a_1\ldots a_{n-1}}{q}$,
there exists $h_1 \in \H(\F)$ such that
$h_1 \lrstep{*}{\B} \stack{a_1}{q}$, %hence $a_1(h_1) \lrstep{*}{\B} q$, 
and $a_1(h_1) \lrstep{a_1\ldots a_{n-1}}{\R} a_{n-1}(h)$.
Since by hypothesis $a_{n-1}(h) \lrstep{}{\R} a_n(h)$, 
we have $a_1(h_1) \lrstep{a_1\ldots a_n}{\R} a_{n}(h)$ and ($ii$) holds for $h$ and $s$.

%We can construct from (\ref{eq:correctness-REN}) the following reduction
%\[
%b(h) \lrstep{*}{\A'} 
%b(\inita{q}{o}{b}) \lrstep{}{\A_0} \state{q}{u}{o}{b}
%\]
%whose first part has a length strictly smaller than $\rho$.

\paragraph{Case $(\REN).2$} 
The last step of $\rho_0$ uses 
$\ustack{a_1\ldots a_{n}}{q} \to \ustack{a_1\ldots a_{n-1}}{q}$ 
and this transition has been added to $\Delta'$
because $\ptrs{\R}{\A}$ contains a rule $a_{n-1}(x) \to a_n(x)$.
In this case, the reduction $\rho$ has the form
\begin{equation}
h \lrstep{*}{\A'} \ustack{a_1\ldots a_{n}}{q} \lrstep{}{\A'} \ustack{a_1\ldots a_{n-1}}{q} = s
\label{eq:correctness-REN}
\end{equation}

By induction hypothesis ($i$) applied to $h \lrstep{*}{\A'} \ustack{a_1\ldots a_{n}}{q}$,
there exists $h_1 \in \H(\F)$ such that
$a_1(h_1) \lrstep{*}{\B} q$, 
and $a_1(h_1) \lrstep{a_1\ldots a_n}{\R} h$.
It follows immediately by definition that
$a_1(h_1) \lrstep{a_1\ldots a_{n-1}}{\R} h$.
%By hypothesis $a_{n-1}(h) \lrstep{}{\R} a_n(h)$, 
%hence $a_1(h_1) \lrstep{a_1\ldots a_n}{\R} a_{n}(h)$ and ($ii$) holds.

\end{RR}

\paragraph{Case $(\CHILD)$.}
The last step of $\rho_0$ uses 
$u \, \stack{a_1\ldots a_n}{q} \, v \to \stack{a_1\ldots a_n}{q}$
and this transition has been added to $\Delta'$
because $\ptrs{\R}{\A}$ contains a rule $a_n(x) \to a_n(u\, x \, v)$, 
with $u, v \in Q_\A^*$.
In this case, the reduction $\rho$ has the following form,
\begin{equation}
h = \ell\, h'\, r \lrstep{*}{\A'} u\,\stack{a_1\ldots a_n}{q}\, v \lrstep{}{\A'} \stack{a_1\ldots a_n}{q} = s
\label{eq:correctness-INSfirst}
\end{equation}
where $\ell \lrstep{*}{\A'} u$,
$h' \lrstep{*}{\A'} \stack{a_1\ldots a_n}{q}$, 
and $r \lrstep{*}{\A'} v$.
By induction hypothesis ($ii$) applied to $h' \lrstep{*}{\A'} \stack{a_1\ldots a_n}{q}$,
there exists $h_1$ such that 
$h_1 \lrstep{*}{\B} \stack{a_1}{q}$
and 
$a_1(h_1) \lrstep{a_1\ldots a_n}{\R} a_n(h')$,
and by induction hypothesis ($iii$) applied to $\ell \lrstep{*}{\A'} u$ (resp. $r \lrstep{*}{\A'} v$),
and by Lemma~\ref{cor:A0},
there exists $\ell' \in \H(\F)$ (resp. $r' \in \H(\F)$) 
such that $\ell' \lrstep{*}{\A} u$ (resp. $r' \lrstep{*}{\A} v$)
and $\ell' \lrstep{*}{\R} \ell$ (resp. $r' \lrstep{*}{\R} r$).
It follows that $a_n(h') \lrstep{}{\R} a_n(\ell'\, h'\, r') \lrstep{*}{\R} a_n(\ell\, h'\,r) = a_n(h)$.
Hence %$a_1(h_1) \lrstep{*}{\B} q$ and 
$a_1(h_1) \lrstep{a_1\ldots a_n}{\R} a_n(h)$ 
and ($ii$) holds for $h$ and $s$.
\begin{ABS}
\qed\end{proof}
\end{ABS}

\begin{RR}

\paragraph{Case $(\SIB)$.}
The last step of $\rho_0$ uses 
$u \, \ustack{a_1\ldots a_n}{q} \, v \to \ustack{a_1\ldots a_n}{q}$
and this transition has been added to $\Delta'$
because $\ptrs{\R}{\A}$ contains a rule $a_n(x) \to u\, a_n(x)\, v$, 
with $u, v \in Q_\A^*$.
In this case, the reduction $\rho$ has the following form,
\[
h = \ell\, h'\, r \lrstep{*}{\A'} u\,\ustack{a_1\ldots a_n}{q}\, v \lrstep{}{\A'} \ustack{a_1\ldots a_n}{q} = s
\]
where $\ell \lrstep{*}{\A'} u$,
$h' \lrstep{*}{\A'} \ustack{a_1\ldots a_n}{q}$, 
and $r \lrstep{*}{\A'} v$.
By induction hypothesis ($i$) applied to $h' \lrstep{*}{\A'} \ustack{a_1\ldots a_n}{q}$,
there exists $h_1$ such that 
$a_1(h_1) \lrstep{*}{\B} q$ and  $a_1(h_1) \lrstep{a_1\ldots a_n}{\R} h'$.
To be more precise, the latter reduction has the form
\[
a_1(h_1) 
\lrstep{*}{\R} a_1(h_2) 
\lrstep{}{\REN} a_2(h_2)
\lrstep{*}{\R} \ldots 
\lrstep{*}{\R} a_{n-1}(h_n)
\lrstep{}{\REN} a_n(h_n) \lrstep{*}{\R} h'
\]
for some $h_2,\ldots, h_n \in \H(\F)$.

Moreover, by induction hypothesis ($iii$) applied to $\ell \lrstep{*}{\A'} u$ 
(resp. $r \lrstep{*}{\A'} v$),
and by Lemma~\ref{cor:A0},
there exists $\ell' \in \H(\F)$ (resp. $r' \in \H(\F)$) 
such that $\ell' \lrstep{*}{\A} u$ (resp. $r' \lrstep{*}{\A} v$)
and $\ell' \lrstep{*}{\R} \ell$ (resp. $r' \lrstep{*}{\R} r$).
Therefore, 
$a_n(h_n) \lrstep{}{\R} \ell'\, a_n(h_n)\, r' \lrstep{*}{\R} \ell\, a_n(h_n)\,r \lrstep{*}{\R} \ell\, h'\,r = h$.
Hence $a_1(h_1) \lrstep{a_1\ldots a_n}{\R} h$ and ($i$) holds for $h$ and $s$.

\paragraph{Case $(\RPL)$.}
The last step of $\rho_0$ uses 
$u \to \ustack{a_1\ldots a_n}{q}$, 
and this transition has been added to $\Delta'$
because $\ptrs{\R}{\A}$ contains a rule $a_n(x) \to u$, with $u \in Q_\A^*$.
In this case, the reduction $\rho$ has the following form,
\[
h \lrstep{*}{\A'} u \lrstep{}{\A'} \ustack{a_1\ldots a_n}{q} = s
\]
By induction hypothesis ($iii$) applied to $h \lrstep{*}{\A'} u$ 
and by Lemma~\ref{cor:A0},
there exists $h' \in \H(\F)$
such that $h' \lrstep{*}{\A} u$ 
and $h' \lrstep{*}{\R} h$.
Since $\B$ is assumed clean, there exists $h_1 \in L(\B, \stack{a_1}{q})$,
and, using the above $(\RPL)$ rewrite rule, $a_n(h_1) \lrstep{}{\R} h' \lrstep{*}{\R} h$.
Hence 
$a_1(h_1) \lrstep{\*}{\B} q$ and
$a_1(h_1) \lrstep{a_1\ldots a_n}{\R} h$ and ($i$) holds for $h$ and $s$.

\paragraph{Case $(\DEL)$.}
The last step of $\rho_0$ uses 
$\stack{a_1\ldots a_n}{q} \to \ustack{a_1\ldots a_n}{q}$
and this transition has been added to $\Delta'$ because
$\ptrs{\R}{\A_0}$ contains a rule $a_n(x) \to x$.
In this case, the reduction $\rho$ has the following form,
\[
h \lrstep{*}{\A'} \stack{a_1\ldots a_n}{q} \lrstep{}{\A'} \ustack{a_1\ldots a_n}{q} = s
\]
By induction hypothesis ($ii$) applied to $h \lrstep{*}{\A'} \stack{a_1\ldots a_n}{q}$ 
there exists $h_1 \in \H(\F)$
such that $h_1 \lrstep{*}{\B} \stack{a_1}{q}$ 
and $a_1(h_1) \lrstep{a_1\ldots a_n}{\R} a_n(h)$.
Therefore, 
$a_1(h_1) \lrstep{\*}{\B} q$ and
$a_1(h_1) \lrstep{a_1\ldots a_n}{\R} h$, and~($i$) holds for $h$ and $s$.
\qed
\end{proof}
\end{RR}

%\begin{RR}
\begin{corollary}
$L(\A') \subseteq \post^*_{\ptrs{\R}{\A}}(L)$
\end{corollary}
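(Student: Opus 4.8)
The plan is to read the inclusion off directly from part~($iii$) of Lemma~\ref{lem:correctness}, specialized to an accepting computation. First I would take an arbitrary $h \in L(\A')$. By definition of acceptance there is a final state $q \in Q^\final$ with $h \lrstep{*}{\A'} q$. Since $Q^\final = P^\final$ and we have assumed \textit{wlog} that $\Pin$ and $P^\final$ are disjoint, the state $q$ lies in $P \setminus \Pin$, which is exactly the hypothesis under which part~($iii$) of the Correctness Lemma applies. This observation is what lets me use ($iii$) rather than ($i$) or ($ii$): only the ``plain'' states of $P \setminus \Pin$ can be final.

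Applying Lemma~\ref{lem:correctness}($iii$) to the reduction $h \lrstep{*}{\A'} q$ then yields a hedge $h' \in \H(\F)$ such that $h' \lrstep{*}{\B} q$ and $h' \lrstep{*}{\R} h$. The first reduction witnesses $h' \in L(\B, q)$, and since $q$ is a final state of $\B$ this gives $h' \in L(\B)$. It remains only to identify $L(\B)$ with $L$: recall that $\B$ is obtained by normalizing the disjoint union $\<\F, Q_\A \uplus Q_L, Q_L^\final, \Delta_\A \uplus \Delta_L>$, whose final states are precisely those of $\A_L$ and whose accepted language is therefore $L(\A_L) = L$; since normalization preserves the recognized language, $L(\B) = L$. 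Hence $h' \in L$.

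Finally, the reduction $h' \lrstep{*}{\R} h$ — where $\R$ abbreviates $\ptrs{\R}{\A}$, i.e. the instantiated system $\R[\A]$ — shows that $h \in \post^*_{\ptrs{\R}{\A}}(\{h'\}) \subseteq \post^*_{\ptrs{\R}{\A}}(L)$, which is the asserted inclusion. I expect no genuine obstacle here, as the statement is essentially a one-step corollary of Lemma~\ref{lem:correctness}: the only two points requiring care are (a) verifying that every final state falls in $P \setminus \Pin$, so that the relevant clause ($iii$) of the lemma is applicable, and (b) recording that the normalization used to build $\B$ leaves $L$ unchanged. No further induction or case analysis beyond what Lemma~\ref{lem:correctness} already supplies is needed.
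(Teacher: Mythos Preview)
Your argument is correct and follows essentially the same route as the paper: both use that $Q^\final = P^\final \subseteq P \setminus \Pin$ to invoke case~($iii$) of Lemma~\ref{lem:correctness}, obtaining $h' \in L(\B,q)$ with $h' \lrstep{*}{\R} h$, and then identify $L(\B,q) \subseteq L(\B) = L$. You are just slightly more explicit than the paper about the normalization step preserving $L(\B) = L$, which is fine.
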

\begin{proof}
By definition of $Q^\final$, $h \in L(\A')$ iff 
$h \lrstep{*}{\A'} q \in P^\final = Q_L^\final$, 
and $P^\final \subseteq P \setminus \Pin$.
By Lemma~\ref{lem:correctness}, case $(iii)$, 
it follows that $h \in \post^*_{\ptrs{\R}{\A}}(L(\B, q)) \subseteq \post^*_{\ptrs{\R}{\A}}(L)$.
\qed
\end{proof}
%\end{RR}

\begin{RR}
\subsubsection{Completeness.}
\end{RR}

\begin{lemma}[Completeness]
For all $h \in \H(\F)$ and $s \in Q$, 
if $h \lrstep{*}{\A_0} s$ and $h \lrstep{*}{\R} h'$, 
then $h' \lrstep{*}{\A'} s$.
\end{lemma}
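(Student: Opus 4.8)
The plan is to prove the statement by induction on the length of the rewrite derivation $h \lrstep{*}{\R} h'$, reducing it to a single rewrite step. For the base case of zero steps we have $h = h'$ and, since $\Delta_0 \subseteq \Delta'$, the given run $h \lrstep{*}{\A_0} s$ is already a run of $\A'$, so $h \lrstep{*}{\A'} s$. For the inductive step I would factor the derivation as $h \lrstep{*}{\R} g \lrstep{}{\R} h'$; the induction hypothesis, applied to $h \lrstep{*}{\A_0} s$ together with the shorter derivation $h \lrstep{*}{\R} g$, yields $g \lrstep{*}{\A'} s$. Everything then rests on the following \emph{one-step preservation} claim, which is the real content: if $g \lrstep{*}{\A'} s$ and $g \lrstep{}{\R} h'$, then $h' \lrstep{*}{\A'} s$.

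To establish the one-step claim I would first localise the rewrite. Since every update rule has a left-hand side of the form $a(x)$, the step rewrites a single subtree: $g = C[a(w)]$ and $h' = C[\rho]$ for some context $C \in \C(\F)$, where $\rho$ is the corresponding instance of the right-hand side. Because the move relation $\lrstep{}{\A'}$ is closed under contexts, and because a tree is always reduced to a single state before it combines with its siblings, I would commute independent transitions of the given run — exactly as in the proof of Lemma~\ref{lem:correctness} — so as to isolate a sub-derivation $a(w) \lrstep{*}{\A'} p$ followed by $C[p] \lrstep{*}{\A'} s$. The node $a$ is consumed either by a vertical transition $a(\stack{c_1\ldots c_k}{q}) \to \ustack{c_1\ldots c_k}{q}$ of $\Delta_0$ or by a $(\PAR)$ transition $a(\ustack{c_1\ldots c_k}{q}) \to \ustack{c_1\ldots c_k}{q}$; in both cases the run passes through a pop-mode \emph{handle} state $\ustack{c_1\ldots c_k}{q}$, with $\ustack{c_1\ldots c_k}{q} \lrstep{*}{\A'} p$. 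It therefore suffices to show $\rho \lrstep{*}{\A'} \ustack{c_1\ldots c_k}{q}$, for then $h' = C[\rho] \lrstep{*}{\A'} C[\ustack{c_1\ldots c_k}{q}] \lrstep{*}{\A'} C[p] \lrstep{*}{\A'} s$.

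The proof concludes with a case analysis on the type of the applied update, dual to the cases of Lemma~\ref{lem:correctness}, where in each case the handle is recovered through the transitions that Table~\ref{tab:completion} adds for that very rule: for $(\DEL)$ the children $w$ reach the handle via $\stack{c_1\ldots c_k}{q} \to \ustack{c_1\ldots c_k}{q}$; for $(\REN)$ with $a(x)\to b(x)$ via the push $\stack{c_1\ldots c_k}{q} \to \stack{c_1\ldots c_k b}{q}$, the vertical reading of $b$, and the pop $\ustack{c_1\ldots c_k b}{q} \to \ustack{c_1\ldots c_k}{q}$; and for $(\CHILD)$, $(\SIB)$, $(\PAR)$ and $(\RPL)$ via the correspondingly shaped transitions, invoking Lemma~\ref{cor:A0} to replay the inserted hedges whose root states form the words $u,v \in Q_\A^*$. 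The hard part, I expect, is \emph{not} any single case but keeping the induction honest: one must guarantee that the renaming chain $c_1\ldots c_k$ actually labelling the handle is the one for which the required completion transition was created, which forces the single-step statement to be strengthened into a precise characterisation of the $\A'$-runs of rewritten hedges — in the spirit of the three clauses $i$–$iii$ of Lemma~\ref{lem:correctness}, separating push-mode, pop-mode and $P$-states. The commutation/normalisation of runs used to expose the handle is the main technical obstacle, and the interaction between renaming and structure-changing updates — notably a node consumed by a $(\PAR)$ transition that is subsequently renamed — is the most delicate point to get right.
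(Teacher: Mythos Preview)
Your approach is essentially the paper's: induction on the length of $h \lrstep{*}{\R} h'$, peeling off the last step $g \lrstep{}{\R} h'$, applying the induction hypothesis to obtain $g \lrstep{*}{\A'} s$, and then establishing the one-step preservation claim by isolating the rewritten subtree $a_n(h_n)$ in the $\A'$-run and performing a case analysis over the six update shapes. Your individual case sketches for $(\REN)$, $(\CHILD)$, $(\SIB)$, $(\PAR)$, $(\RPL)$, $(\DEL)$ coincide with the paper's calculations, including the appeal to Lemma~\ref{cor:A0} for the inserted parameter hedges.

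Where you diverge from the paper is in the amount of machinery you anticipate needing. The paper does \emph{not} strengthen the one-step claim into a three-clause invariant in the style of Lemma~\ref{lem:correctness}. It simply asserts (``modulo permutation of transitions'') the decomposition
\[
C[a_n(h_n)] \;\lrstep{*}{\A'}\; C[a_n(\stack{a_1\ldots a_n}{q})] \;\lrstep{}{\A_0}\; C[\ustack{a_1\ldots a_n}{q}] \;\lrstep{*}{\A'}\; s,
\]
i.e., that the rewritten $a_n$-node is consumed by a $\Delta_0$ vertical transition, so that the handle is a pop-mode state whose chain \emph{ends in $a_n$}. Under that assumption every case is a two-line calculation and no further invariant is required. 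The alternative you flag --- that the node is consumed by a $(\PAR)$ transition $a_n(\ustack{c_1\ldots c_k}{q}) \to \ustack{c_1\ldots c_k}{q}$ with $c_k \neq a_n$, after which the completion transitions keyed on chains ending in $a_n$ are unavailable --- is not discussed in the paper; the decomposition is taken for granted. So your caution about ``a node consumed by a $(\PAR)$ transition that is subsequently renamed'' is well placed, but the paper's own proof does not confront it: it proceeds directly from the asserted $\Delta_0$-shaped decomposition to the six cases.
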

\begin{RR}
\begin{proof}
\end{RR}
The proof is by induction on the length of the rewrite sequence $h \lrstep{*}{\R} h'$
\begin{ABS}
(see~\cite{versionlongue}).
\end{ABS}
\begin{RR}

\noindent
If the length is 0, the result is immediate.

\noindent
%If there is at most 1 rewrite step, 
Otherwise, we analyze the last rewrite step.
More precisely, assume that the rewrite step has the following form
\[
h \lrstep{*}{\R} C[a_n(h_n)] \lrstep{}{\R} C[r\sigma] = h'
\]
for some context $C[\;]$,
where the last step applies one rewrite rule 
$\rho = a_n(x) \to r$ 
and the substitution $\sigma$ associates $x$ to $h_n \in \H(\F)$.
By induction hypothesis, $C[a_n(h_n)] \lrstep{*}{\A'} s$.
This latter reduction can be decomposed as follows, 
modulo permutation of transitions,
\[
C[a_n(h_n)]_p \lrstep{*}{\A'} C[a_n(s_1\ldots s_m)]
\lrstep{*}{\A'} C[a_n(\stack{a_1\ldots a_n}{q})]
\lrstep{}{\A_0} C[\ustack{a_1\ldots a_n}{q}]
\lrstep{*}{\A'} s  
\]
where $s_1,\ldots, s_m \in Q$ and 
$a_1,\ldots, a_{n-1} \in \F$.
We show, with a case analysis over $\rho$, that 
$r \sigma \lrstep{*}{\A'} \ustack{a_1\ldots a_n}{q}$,
which implies that 
$h' = C[r \sigma] \lrstep{*}{\A'} C[\ustack{a_1\ldots a_n}{q}] \lrstep{*}{\A'} s$.

\paragraph{Case $(\REN)$:} 
$\rho = a_n(x) \to b(x)$.
In this case, 
%$r_0 = b(h_0)$ and 
two transitions have been added to $\A'$:
$\stack{a_1\ldots a_n}{q} \to \stack{a_1\ldots a_n b}{q}$ and
$\ustack{a_1\ldots a_n b}{q} \to \ustack{a_1\ldots a_n}{q}$.
Hence we have,
\[ r \sigma = b(h_n) 
\lrstep{*}{\A'} b(\stack{a_1\ldots a_n}{q})
\lrstep{}{\A'} b(\stack{a_1\ldots a_n b}{q})
\lrstep{}{\A_0} \ustack{a_1\ldots a_n b}{q}
\lrstep{}{\A'} \ustack{a_1\ldots a_n}{q}
\]

\paragraph{Case $(\CHILD)$:} 
$\rho = a_n(x) \to a_n(u\, x \, v)$, 
with $u, v \in Q_\A^*$.
In this case, the following transition has been added to $\A'$:
$u \, \stack{a_1\ldots a_n}{q} \, v \to \stack{a_1\ldots a_n}{q}$,
and $r \sigma = a_n(h_1\, h_n\, h_2)$ 
where $h_1 \lrstep{*}{\A} u$ and $h_2 \lrstep{*}{\A} v$.
Hence, using Lemma~\ref{cor:A0} for the first steps,

\[
r \sigma = a_n(h_1\, h_n\, h_2) 
\lrstep{*}{\A_0} a_n(u\, h_n\, v)
\lrstep{*}{\A'} a_n(u\, \stack{a_1\ldots a_n}{q}\, v)
\lrstep{}{\A'} a_n(\stack{a_1\ldots a_n}{q})
\lrstep{}{\A_0} \ustack{a_1\ldots a_n}{q}
\]

\paragraph{Case $(\SIB)$:} 
$\rho = a_n(x) \to u\, a_n(x) \, v$, 
with $u, v \in Q_\A^*$.
In this case, the following transition has been added to $\A'$:
$u \, \ustack{a_1\ldots a_n}{q} \, v \to \ustack{a_1\ldots a_n}{q}$, 
and $r \sigma = h_1\, a_n(h_n)\, h_2$ 
where $h_1 \lrstep{*}{\A} u$ and $h_2 \lrstep{*}{\A} v$.
Hence it holds that (using Lemma~\ref{cor:A0} for the first steps)
\[
r \sigma = h_1\, a_n(h_n)\, h_2 
\lrstep{*}{\A_0} u\, a_n(h_n)\, v
\lrstep{*}{\A'} u\, a_n(\stack{a_1\ldots a_n}{q})\, v
\lrstep{}{\A_0} u\, \ustack{a_1\ldots a_n}{q}\, v
\lrstep{}{\A'} \ustack{a_1\ldots a_n}{q}
\]

\paragraph{Case $(\PAR)$:} 
$\rho = a_n(x) \to b\bigl(a_n(x)\bigr)$.
In this case, the following vertical transitions have been added to $\A'$:
$b(\ustack{a_1\ldots a_n}{q}) \to \ustack{a_1\ldots a_n}{q}$, 
and we have:
%$r \sigma = b\bigl(a_n(h_n)\bigr)$. Hence it holds that 
\[
r \sigma =  b\bigl(a_n(h_n)\bigr)
\lrstep{*}{\A'} b\bigl(a_n(\stack{a_1\ldots a_n}{q})\bigr)
\lrstep{}{\A_0} b(\ustack{a_1\ldots a_n}{q})
\lrstep{}{\A'} \ustack{a_1\ldots a_n}{q}
\]

\paragraph{Case $(\RPL)$:}
$\rho = a_n(x) \to u$, with $u \in Q_\A^*$.
In this case, the following transition has been added to $\A'$:
$u \to \ustack{a_1\ldots a_n}{q}$.
%since $L(\A_0, \stack{a_1}{q}) \neq \emptyset$.
It holds that $r \sigma \lrstep{*}{\A} u$, hence
\(
r \sigma \lrstep{*}{\A_0} u \lrstep{}{\A'} \ustack{a_1\ldots a_n}{q}\), 
using Lemma~\ref{cor:A0} for the first steps.

\paragraph{Case $(\DEL)$:}
$\rho = a_n(x) \to x$.
In this case, the following transition has been added to $\A'$:
$\stack{a_1\ldots a_n}{q} \to \ustack{a_1\ldots a_n}{q}$, and
\(
r \sigma = h_n
\lrstep{*}{\A'} \stack{a_1\ldots a_n}{q}
\lrstep{}{\A'} \ustack{a_1\ldots a_n}{q}
\).
\qed
\end{proof}
\end{RR}
As another consequence of the result of \cite{JR-rta2008} on the rewrite closure of \HA languages
under monadic \TRS, 
the backward closure of a \HA language under an \uPTRS is \HA.

%%%%%%%%%%%%%%%%%%%%%%%%%%%%%%%%%%%%%%%%%%%%%%%%%%%%%%%%%%%
%%
%% DISCUSSION
%%
%%%%%%%%%%%%%%%%%%%%%%%%%%%%%%%%%%%%%%%%%%%%%%%%%%%%%%%%%%%
\medskip
%\subsection{Discussion}
%\subsubsection{Combination of Rewrite Rules.}
The rules of type $(\REN)$, $(\SIB)$, $(\PAR)$ and $(\RPL)$ can be easily simulated
by the \TRS of Theorem~\ref{th:post*-update}.
%finite, linear inverse-monadic 1-varchild \TRS.
In particular, the parameters' semantics can be simulated 
using ground rewrite rules 
(with such rules, a symbol can generate a \HA language).
The rules $(\CHILD)$ are not 1-childvar and
the rules $(\DEL)$ is not inverse-monadic.

Example~\ref{ex:gen} shows the problems that can arise 
when combining in one single rewrite rule two rules of the form $(\SIB)$ and $(\CHILD)$, 
forcing synchronization of two updates.
Note that the rule $a(x) \to c\,a(e\,x\,g)\,d$ of this example can be simulated by the 2 rules
\( a(x) \to c\,a'(x)\,d \) and \( a'(x) \to a(e\,x\,g) \).
The former rule is of the type of Theorem~\ref{th:post*-update}
(it combines types $(\SIB)$ and $(\REN)$).
The latter (which is not 1-varchild) combines types $(\CHILD)$  and $(\REN)$.
This shows that such combinations can also lead to the behavior exposed in Example~\ref{ex:gen}.

\section*{Future Works}
As for future works on \sCFHA languages  several directions deserve to 
be followed. 
A first direction might be to  derive  pumping properties for these classes of languages.
%- Star lemma?
%- counting properties (à la Parikh) of \sCFHA languages?  HRS with counting constraints (on horizontal and vertical path%s)?

A second direction would be to look for an analogous of Parikh characterization 
for the number of different symbols occurring in the hedges of given \sCFHA languages.
One may define and study HRS with counting constraints on horizontal and vertical paths. 

Finally, it would be is worth investigating the parallel rewriting of \cite{Delzanno12gandalf}, 
on all $a$-positions,
since it is  closer to the semantics of XQUF,
%It would be interesting to 
and get an analogous of Theorem~\ref{th:post*-update} for the parallel rewrite closure. 

% cf. L-systems
%   1-step preserves HA languages. 

%%%%%%%%%%%%%%%%%%%%%%%%%%%%%%%%%%%%%%%%%%%%%%%%%%%%%%%%%%%
%%
%% BIBLIO
%%
%%%%%%%%%%%%%%%%%%%%%%%%%%%%%%%%%%%%%%%%%%%%%%%%%%%%%%%%%%%
%\bibliographystyle{abbrv} 
%\bibliographystyle{unsrt} % ordre de citation
%\bibliographystyle{alpha}
\bibliographystyle{splncs03} 

\bibliography{HA} 

\end{document}